\newcommand{\BG}[1]{\todo[inline,color=shadered]{\textbf{Boris says:} #1}}
\newcommand{\SL}[1]{\todo[inline,color=shadeblue]{\textbf{Seokki says:} #1}}
\newcommand{\SLDel}[1]{\todo[inline,color=shadeblue]{\textbf{Seokki deleted:} #1}}
\newcommand{\mypara}[1]{\noindent\textbf{#1.}}
\newcommand{\mypartitle}[1]{\smallskip\noindent\textbf{#1.}}
\newcommand{\listconcat}{\,{\tt ::}\,}
\newcommand{\dlImp}[0]{\,\ensuremath{\mathtt{{:}-}}\,}
\newcommand{\dlNeg}{\neg\,}
\newcommand{\bodyOf}[1]{body(#1)}
\newcommand{\headOf}[1]{head(#1)}
\newcommand{\varsOf}[1]{vars(#1)}
\newcommand{\attrsOf}[1]{attrs(#1)}
\newcommand{\argsOf}[1]{args(#1)}
\newcommand{\predOf}[1]{pred(#1)}
\newcommand{\matches}{\curlyeqprec}
\newcommand{\rel}[1]{\ensuremath{\mathtt{#1}}}
\newcommand{\depthP}[1]{d(#1)}
\newcommand{\aDepth}{d}
\newcommand{\isSucc}{\models}
\newcommand{\isFailed}{\not\models}
\newcommand{\KPlus}{\ensuremath{+}\xspace}
\newcommand{\KTimes}{\ensuremath{\times}\xspace}
\newcommand{\KTimesB}{\ensuremath{\bullet}\xspace}
\newcommand{\KNeg}{\ensuremath{\neg}\xspace}
\newcommand{\ProvPoly}{\ensuremath{\mathbb{N}[X]}\xspace}
\newcommand{\provGraph}{\ensuremath{{\cal PG}}\xspace}
\newcommand{\nodeLabel}{{\cal L}}
\newcommand{\successLabel}{{\cal S}}
\newcommand{\stringDom}{\mathbb{L}}
\newcommand{\GPProg}{\mathbb{GP}}
\newcommand{\whyq}{\textsc{Why}\,}
\newcommand{\whynotq}{\textsc{Whynot}\,}
\newcommand{\explainq}{\textsc{Expl}}
\newcommand{\qType}{typeof}
\newcommand{\provQ}{\textsc{PQ}}
\newcommand{\unProg}[1]{#1_U}
\newcommand{\adProg}[1]{#1_A}
\newcommand{\fireProg}[1]{{#1}_{Fire}}
\newcommand{\fireCProg}[1]{#1_{FC}}
\newcommand{\moveProg}[1]{#1_M}
\newcommand{\fire}[3]{\rel{F_{#1,#3}}}
\newcommand{\fireC}[4]{\rel{FC_{#1,{#4},{#3}}}}
\newcommand{\boolT}{true}
\newcommand{\boolF}{false}
\newcommand{\adornment}{\sigma}
\newcommand{\greenT}{\textcolor{DarkGreen}{T}}
\newcommand{\redF}{\textcolor{DarkRed}{F}}
\newcommand{\nodeSk}[3]{f_{#2}^{#3}}
\definecolor{DarkGreen}{rgb}{0,0.45,0}
\definecolor{DarkRed}{rgb}{0.8,0,0}
\definecolor{DarkYellow}{rgb}{0.6,0.6,0}
\definecolor{DarkGray}{rgb}{0.2,0.2,0.2}
\newcommand{\adom}[1]{\ensuremath{\mathit{adom}(#1)}}
\newcommand{\domA}{\ensuremath{\mathit{dom}}}
\newcommand{\thead}[1]{{\cellcolor{black}{\textcolor{white}{\textbf{#1}}}}}
\newcommand{\mathtab}{\ensuremath\thickspace\thickspace\thickspace}
\newcommand{\card}[1]{\| {#1} \|}
\algrenewcommand\algorithmicindent{1.1em}
\newtheorem{Theorem}{Theorem}
\newtheorem{Definition}{Definition}
\newtheorem{Example}{Example}
\definecolor{black}{rgb}{0,0,0}
\definecolor{grey}{rgb}{0.8,0.8,0.8}
\definecolor{red}{rgb}{1,0,0}
\definecolor{green}{rgb}{0,1,0}
\definecolor{darkgreen}{rgb}{0,0.5,0}
\definecolor{darkpurple}{rgb}{0.5,0,0.5}
\definecolor{darkdarkpurple}{rgb}{0.3,0,0.3}
\definecolor{blue}{rgb}{0,0,1}
\definecolor{shadegreen}{rgb}{0.95,1,0.95}
\definecolor{shadeblue}{rgb}{0.95,0.95,1}
\definecolor{shadered}{rgb}{1,0.85,0.85}
\definecolor{shadegrey}{rgb}{0.85,0.85,0.85}
\definecolor{oddRowGrey}{rgb}{0.80,0.80,0.80}
\definecolor{evenRowGrey}{rgb}{0.85,0.85,0.85}
\title{\bf Efficiently Computing Provenance Graphs for Queries with Negation}
\newcommand{\emailaddr}[1]{\texttt{\small #1}}
\author{
 \IEEEauthorblockN{
 Seokki Lee\IEEEauthorrefmark{1}
 \hfill Sven K\"ohler\IEEEauthorrefmark{2}
 \hfill Bertram Lud\"ascher\IEEEauthorrefmark{3}
 \hfill Boris Glavic\IEEEauthorrefmark{1}
 }
 \IEEEauthorblockA{\IEEEauthorrefmark{1}Illinois Institute of
   Technology. ~ \emailaddr{\{slee195@hawk.iit.edu}, 
 		\emailaddr{bglavic@iit.edu\}}}
 \IEEEauthorblockA{\IEEEauthorrefmark{3}University of Illinois at
   Urbana-Champaign. ~ \emailaddr{\{ludaesch@illinois.edu\}}}
 \IEEEauthorblockA{\IEEEauthorrefmark{2}University of California at
   Davis. ~ \emailaddr{\{svkoehler@ucdavis.edu\}}}
}
\newcommand{\TRtitle}{Efficiently Computing Provenance Graphs for Queries with Negation}
\newcommand{\TRauthors}{Seokki Lee, Sven K\"ohler, Bertram Lud\"ascher, Boris Glavic}
\newcommand{\TRnumber}{IIT/CS-DB-2016-03}
\newcommand{\TRdate}{2016-10}
\begin{document}

\twocolumn[{

\vspace{1cm}

\begin{minipage}{0.5\linewidth}
  \colorbox{black}{\includegraphics[width=1\linewidth]{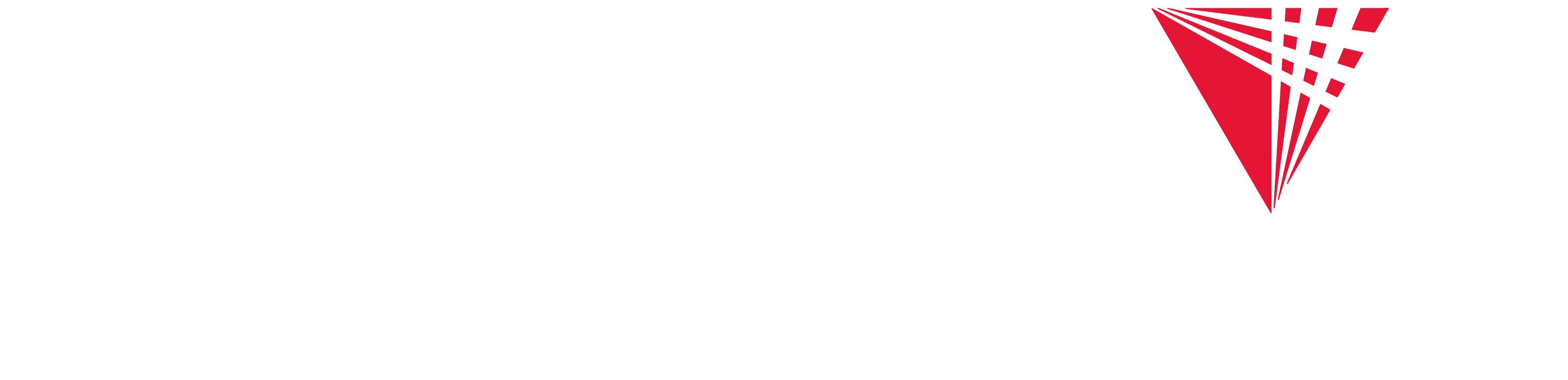}}
\end{minipage}
\hfill
\begin{minipage}{0.16\linewidth}
  \includegraphics[width=1\linewidth]{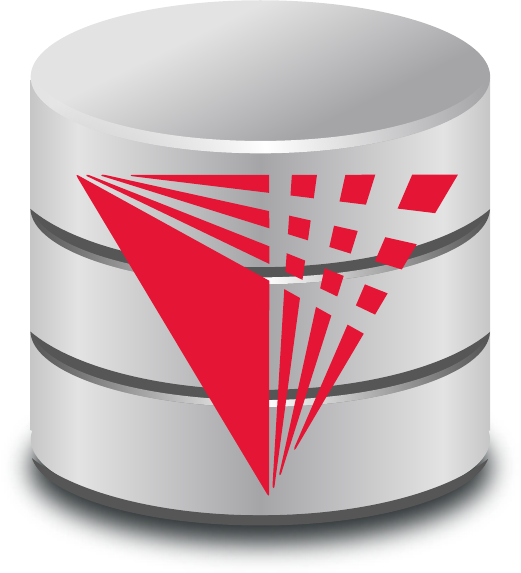}  
\end{minipage}\\
\vspace{4cm}

\centering
\begin{minipage}{1.0\linewidth} 
\centering
{\Huge \bf \TRtitle}
\end{minipage}
\\
\vspace{1cm}

{\huge \TRauthors}\\
\vspace{1cm}

{\huge \tt IIT DB Group Technical Report \TRnumber}\\
\vspace{1cm}

{\Large \TRdate}
\vspace{3cm}

{\huge \url{http://www.cs.iit.edu/~dbgroup/}}

\vspace{3cm}
\begin{minipage}{1.0\linewidth}
\textbf{LIMITED DISTRIBUTION NOTICE}: The research presented in this report may be submitted as a whole or in parts for publication  and will probably be copyrighted if accepted for publication. It has been issued as a Technical Report for early dissemination of its contents. In view of the transfer of copyright to the outside publisher, its distribution outside of IIT-DB prior to publication should be limited to peer communications and specific requests. After outside publication, requests should be filled only by reprints or legally obtained copies of the article (e.g. payment of royalties).  
\end{minipage}

}]

\clearpage

\graphicspath{ {./figures/} }
\maketitle

\begin{abstract}
Explaining why an answer is in the result of a query or why it is missing from the result is important for many applications including auditing, debugging data and queries, and answering hypothetical questions about data.
Both types of questions, i.e., \emph{why} and \emph{why-not} 
provenance, 
have been studied extensively.
In this work, we present the first \emph{practical} approach for 
answering such questions for queries with negation (first-order queries). 
Our approach is based on a rewriting of Datalog rules (called \emph{firing rules}) that captures successful rule derivations within the context of a Datalog query.
We extend this rewriting 
to support negation and to capture failed derivations that explain missing answers. 
Given a (why or why-not) provenance question, we compute an \emph{explanation}, i.e., the part of the provenance 
that is relevant to answer the question. We introduce optimizations that prune parts of a provenance graph early on if we can determine that they will not be part of the explanation for a given question.
We present an implementation that runs on top of a relational database using SQL to compute explanations. 
Our experiments demonstrate that our approach scales to large instances and significantly outperforms an earlier approach which instantiates the full provenance to compute explanations. 
\end{abstract}




\section{Introduction}
\label{sec:intro}

Provenance for relational queries records how results of a query
depend on the query's inputs.  This type of information can be used to
explain \emph{why} (and \emph{how}) a result is derived by a query
over a given database.  Recently, approaches have been developed that
use provenance-like techniques to explain why a tuple (or a set of tuples described declaratively by a 
pattern) is \emph{missing} from the query
result.  However, the two problems of computing provenance and
explaining missing answers have been treated mostly in isolation. A notable exception is~\cite{MG10} which computes causes for answers and non-answers. 
However, the approach requires the user to specify which missing inputs to consider as causes for a missing output.
Capturing provenance for a query with negation necessitates the
unification of \emph{why} and \emph{why-not} provenance, because to explain a result of the query we have to describe 
how existing and missing intermediate results (via positive and
negative subqueries, respectively) lead to the creation of the result. This has also been recognized by K\"ohler et al.~\cite{KL13}: asking
why a tuple $t$ is absent from the result of a query $Q$ is equivalent to asking why $t$ 
is present in $\neg Q$. Thus, a provenance model that supports queries
with negation naturally supports why-not provenance. 
In this paper, 
we present a framework 
that answers 
why and why-not 
questions for
queries with 
negation.  
To this end, we introduce a graph model 
for 
provenance of
first-order (FO) 
queries (i.e., non-recursive Datalog 
with negation) and an efficient method for explaining a (missing) answer using SQL. 
Our approach is based on the observation that typically only a part of provenance, 
which we call \textit{explanation} in this work, 
is actually relevant for answering the user's provenance question 
about the existence or absence of a 
result.

%
%
%
%
%
\begin{figure}[t]
  \centering $\,$\\[-3mm]
    \begin{minipage}{1\linewidth}
      \centering
      \begin{align*}
		&r_1: \rel{Q}(X,Y) :- \rel{Train}(X,Z), \rel{Train}(Z,Y), \neg \rel{Train}(X,Y)\\
      \end{align*}
    \end{minipage}\\[-4mm]
    \begin{minipage}{1.1\linewidth}
      \begin{minipage}{1\linewidth}
     \begin{minipage}{0.54\linewidth}
     \begin{minipage}{0.2\linewidth}
       \resizebox{1\columnwidth}{!}{\begin{minipage}{1.4\linewidth}
\begin{tikzpicture}[>=latex',
line join=bevel,
line width=0.4mm,
every node/.style={ellipse},
minimum height=4mm]

  \definecolor{fillcolor}{rgb}{0.0,0.0,0.0};
%
  \node (s) at (30bp,30bp) [draw=black,circle] {$s$};
  \node (c) at (60bp,30bp) [draw=black,circle] {$c$};
  \node (n) at (60bp,0bp) [draw=black,circle] {$n$};
  \node (w) at (30bp,0bp) [draw=black,circle] {$w$};

   \path[]
%
%
   		(n) edge [->] node [above] {} (w)
   		(w) edge [->]  node [left] {} (s)
   		(n) edge [->] node [below] {} (c)
   		(c) edge [in=40,out=120,->] node [right] {} (s)
   		(s) edge [->] node [above] {} (c);

\end{tikzpicture}
\end{minipage}

     \end{minipage}
      \begin{minipage}{0.78\linewidth}
        \scriptsize \centering $\,$\\[-1mm]
        \begin{tabular}{|cc|}
          \multicolumn{2}{c}{Relation \textbf{Train}}  \\[0.5mm]\cline{1-2}
	  \thead {fromCity} & \thead {toCity} \\ 
          new york & washington dc \\
          new york & chicago \\
          chicago & seattle \\
	  seattle & chicago \\
          washington dc & seattle \\
          \cline{1-2}
        \end{tabular}
      \end{minipage}
    \end{minipage}
     \begin{minipage}{0.45\linewidth}

      \begin{minipage}{0.55\linewidth}
 	\scriptsize \centering
         \begin{tabular}{|cc|}
           \multicolumn{2}{c}{Result of query $\textbf{Q}$}  \\[0.5mm]\cline{1-2}
 	  \thead {X} & \thead {Y} \\ 
 	   washington dc & chicago\\ 
 	   new york & seattle \\ 
	   seattle & seattle \\
	   chicago & chicago \\
           \cline{1-2}
         \end{tabular}
     \end{minipage}\\
	
%
%
   \end{minipage}
      \end{minipage}\\ %
    \end{minipage}
    $\,$\\[-1.5mm]
  \caption{Example train connection database and query}
  \label{fig:running-example-db}
\end{figure}
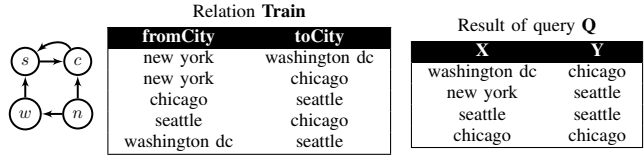
\begin{figure}[t]
  \centering
  $\,$\\[-3mm]
  \resizebox{0.73\columnwidth}{!}{\begin{tikzpicture}[>=latex',line join=bevel,line width=0.3mm]
  \definecolor{fillcolor}{rgb}{0.8,1.0,0.8};
  \node (REL_Q_WON_n_s_) at (150bp,175bp) [draw=black,fill=fillcolor,ellipse] {$Q(n,s)$};

  \node (RULE_0_LOS_n_s_w_) at (91bp,150bp) [draw=black,fill=fillcolor,rectangle] {$r_1(n,s,w)$};
  \node (GOAL_0_0_WON_n_w_) at (33bp,125bp) [draw=black,fill=fillcolor,rounded corners=.15cm,inner sep=3pt] {$g_{1}^{1}(n,w)$};
  \node (EDB_T_LOS_n_w_) at (33bp,95bp) [draw=black,fill=fillcolor,ellipse] {$T(n,w)$};

  \node (GOAL_0_1_WON_w_s_) at (93bp,125bp) [draw=black,fill=fillcolor,rounded corners=.15cm,inner sep=3pt] {$g_{1}^{2}(w,s)$};
  \node (EDB_T_LOS_w_s_) at (93bp,95bp) [draw=black,fill=fillcolor,ellipse] {$T(w,s)$};

  \node (GOAL_0_2_WON_n_s_) at (150bp,125bp) [draw=black,fill=fillcolor,rounded corners=.15cm,inner sep=3pt] {$g_{1}^{3}(n,s)$};
  \definecolor{fillcolor}{rgb}{1.0,0.51,0.51};
  \node (EDB_T_LOS_n_s_) at (150bp,95bp) [draw=black,fill=fillcolor,ellipse] {$T(n,s)$};

  \definecolor{fillcolor}{rgb}{0.8,1.0,0.8};
  \node (RULE_0_LOS_n_s_c_) at (211bp,150bp) [draw=black,fill=fillcolor,rectangle] {$r_1(n,s,c)$};
  \node (GOAL_0_0_WON_n_c_) at (209bp,125bp) [draw=black,fill=fillcolor,rounded corners=.15cm,inner sep=3pt] {$g_{1}^{1}(n,c)$};
  \node (EDB_T_LOS_n_c_) at (209bp,95bp) [draw=black,fill=fillcolor,ellipse] {$T(n,c)$};

  \node (GOAL_0_1_WON_c_s_) at (265bp,125bp) [draw=black,fill=fillcolor,rounded corners=.15cm,inner sep=3pt] {$g_{1}^{2}(c,s)$};
  \node (EDB_T_LOS_c_s_) at (265bp,95bp) [draw=black,fill=fillcolor,ellipse] {$T(c,s)$};

  \draw [->] (RULE_0_LOS_n_s_c_) -> (GOAL_0_0_WON_n_c_);
  \draw [->] (REL_Q_WON_n_s_) -> (RULE_0_LOS_n_s_c_);
  \draw [->] (RULE_0_LOS_n_s_c_) -> (GOAL_0_1_WON_c_s_);

  \draw [->] (GOAL_0_0_WON_n_c_) -> (EDB_T_LOS_n_c_);

  \draw [->] (REL_Q_WON_n_s_) -> (RULE_0_LOS_n_s_w_);
  \draw [->] (RULE_0_LOS_n_s_w_) -> (GOAL_0_1_WON_w_s_);
  \draw [->] (RULE_0_LOS_n_s_c_) -> (GOAL_0_2_WON_n_s_);
  \draw [->] (RULE_0_LOS_n_s_w_) -> (GOAL_0_2_WON_n_s_);
  \draw [->] (GOAL_0_2_WON_n_s_) -> (EDB_T_LOS_n_s_);

  \draw [->] (GOAL_0_1_WON_c_s_) -> (EDB_T_LOS_c_s_);

  \draw [->] (GOAL_0_0_WON_n_w_) -> (EDB_T_LOS_n_w_);


  \draw [->] (GOAL_0_1_WON_w_s_) -> (EDB_T_LOS_w_s_);

  \draw [->] (RULE_0_LOS_n_s_w_) -> (GOAL_0_0_WON_n_w_);
\end{tikzpicture}
  $\,$\\[-1mm]
  \caption{Provenance graph explaining $\whyq \rel{Q}(n,s)$}
  \label{fig:exam-pg-why-NY-seattle}
\end{figure}
\begin{figure}[t]
  \centering
  \resizebox{0.87\columnwidth}{!}{\begin{tikzpicture}[>=latex',line join=bevel,line width=0.3mm]

  \definecolor{fillcolor}{rgb}{1.0,0.51,0.51};
  \node (REL_Q_LOS_c_s_) at (285bp,240bp) [draw=black,fill=fillcolor,ellipse] {$Q(s,n)$};

  \node (RULE_0_WON_c_s_w_) at (180bp,213bp) [draw=black,fill=fillcolor,rectangle] {$r_1(s,n,w)$};
  \node (GOAL_0_0_LOS_c_w_) at (155bp,183bp) [draw=black,fill=fillcolor,rounded corners=.15cm,inner sep=3pt] {$g_{1}^{1}(s,w)$};
  \node (REL_TR_LOS_c_w_) at (145bp,152bp) [draw=black,fill=fillcolor,ellipse] {$T(s,w)$};
  \node (GOAL_0_1_LOS_w_n_) at (205bp,183bp) [draw=black,fill=fillcolor,rounded corners=.15cm,inner sep=3pt] {$g_{1}^{2}(w,n)$};
  \node (REL_TR_LOS_w_n_) at (205bp,152bp) [draw=black,fill=fillcolor,ellipse] {$T(w,n)$};

  \node (RULE_0_WON_c_s_n_) at (255bp,213bp) [draw=black,fill=fillcolor,rectangle] {$r_1(s,n,c)$};

  \node (GOAL_0_0_LOS_c_n_) at (255bp,183bp) [draw=black,fill=fillcolor,rounded corners=.15cm,inner sep=3pt] {$g_{1}^{2}(c,n)$};
  \node (REL_TR_LOS_c_n_) at (265bp,152bp) [draw=black,fill=fillcolor,ellipse] {$T(c,n)$};

  \node (RULE_0_WON_c_s_c_) at (320bp,213bp) [draw=black,fill=fillcolor,rectangle] {$r_1(s,n,s)$};
  \node (GOAL_0_0_LOS_c_c_) at (300bp,183bp) [draw=black,fill=fillcolor,rounded corners=.15cm,inner sep=3pt] {$g_{1}^{1}(s,s)$};
  \node (REL_TR_LOS_c_c_) at (320bp,152bp) [draw=black,fill=fillcolor,ellipse] {$T(s,s)$};
  \node (GOAL_0_1_LOS_s_n_) at (345bp,183bp) [draw=black,fill=fillcolor,rounded corners=.15cm,inner sep=3pt] {$g_{1}^{2}(s,n)$};
  \node (REL_TR_LOS_s_n_) at (375bp,152bp) [draw=black,fill=fillcolor,ellipse] {$T(s,n)$};

  \node (RULE_0_WON_c_s_s_) at (390bp,213bp) [draw=black,fill=fillcolor,rectangle] {$r_1(s,n,n)$};
  \node (GOAL_0_0_LOS_s_s_) at (390bp,183bp) [draw=black,fill=fillcolor,rounded corners=.15cm,inner sep=3pt] {$g_{1}^{1}(s,n)$};
  \node (GOAL_0_1_LOS_n_n_) at (435bp,183bp) [draw=black,fill=fillcolor,rounded corners=.15cm,inner sep=3pt] {$g_{1}^{2}(n,n)$};
  \node (REL_TR_LOS_n_n_) at (435bp,152bp) [draw=black,fill=fillcolor,ellipse] {$T(n,n)$};

  \draw [->] (RULE_0_WON_c_s_s_) -> (GOAL_0_0_LOS_s_s_);
  \draw [->] (RULE_0_WON_c_s_s_) -> (GOAL_0_1_LOS_n_n_);
  \draw [->] (GOAL_0_1_LOS_n_n_) -> (REL_TR_LOS_n_n_);
  \draw [->] (REL_Q_LOS_c_s_) -> (RULE_0_WON_c_s_w_);
  \draw [->] (GOAL_0_0_LOS_c_w_) -> (REL_TR_LOS_c_w_);
  \draw [->] (GOAL_0_0_LOS_c_c_) -> (REL_TR_LOS_c_c_);
  \draw [->] (REL_Q_LOS_c_s_) -> (RULE_0_WON_c_s_s_);
  \draw [->] (RULE_0_WON_c_s_n_) -> (GOAL_0_0_LOS_c_n_);
  \draw [->] (RULE_0_WON_c_s_c_) -> (GOAL_0_0_LOS_c_c_);
  \draw [->] (RULE_0_WON_c_s_c_) -> (GOAL_0_1_LOS_s_n_);
  \draw [->] (GOAL_0_1_LOS_s_n_) -> (REL_TR_LOS_s_n_);
  \draw [->] (GOAL_0_0_LOS_s_s_) -> (REL_TR_LOS_s_n_);
  \draw [->] (REL_Q_LOS_c_s_) -> (RULE_0_WON_c_s_n_);
  \draw [->] (RULE_0_WON_c_s_w_) -> (GOAL_0_0_LOS_c_w_);
  \draw [->] (RULE_0_WON_c_s_w_) -> (GOAL_0_1_LOS_w_n_);
  \draw [->] (GOAL_0_1_LOS_w_n_) -> (REL_TR_LOS_w_n_);

  \draw [->] (REL_Q_LOS_c_s_) -> (RULE_0_WON_c_s_c_);
  \draw [->] (GOAL_0_0_LOS_c_n_) -> (REL_TR_LOS_c_n_);

\end{tikzpicture}
  $\,$\\[-1mm]
  \caption{Provenance graph explaning $\whynotq \rel{Q}(s,n)$}
  \label{fig:exam-pg-whynot-chicago-seattle}
\end{figure}

\begin{Example}
\label{ex:example1}
Consider the relation $\rel{Train}$ in Fig.\,\ref{fig:running-example-db} that stores 
train connections in the US.
The Datalog rule $r_1$ in Fig.\,\ref{fig:running-example-db} computes which cities can be reached 
with exactly one transfer, but not directly.
We use the following abbreviations in provenance graphs: T = Train;
n = New York; s = Seattle; w = Washington DC and c = Chicago.
Given the result of this query,  
the user may be interested to know why he/she is
able to reach Seattle from New York ($\whyq \rel{Q}(n,s)$) with one
intermediate stop but not directly 
or why it is not possible to reach New York from Seattle  
in the same fashion ($\whynotq \rel{Q}(s,n)$).
\end{Example}
%
An explanation for either type of question should justify the existence (absence) of a 
result 
as the success (failure) 
to derive the result 
through the rules of the query. 
Furthermore, it should explain how the existence (absence) of tuples in the database caused the derivation to succeed (fail). 
Provenance graphs 
providing this type of justification for $\whyq \rel{Q}(n,s)$ and $\whynotq \rel{Q}(s,n)$ are shown in 
Fig.\,\ref{fig:exam-pg-why-NY-seattle} and
Fig.\,\ref{fig:exam-pg-whynot-chicago-seattle}, respectively.
There are three types of graph nodes: 
\emph{rule nodes} (boxes labeled with a rule identifier and the constant arguments 
of a rule derivation), 
\emph{goal nodes} (rounded boxes labeled with a rule identifier and the goal's position in the rule's body), and \emph{tuple nodes} (ovals). 
In these provenance graphs, nodes are either colored as 
\emph{green} (successful/existing) or \emph{red} (failed/missing).
%

\begin{Example}
Consider the explanation (provenance graph in Fig.\,\ref{fig:exam-pg-why-NY-seattle}) for question $\whyq \rel{Q}(n,s)$.  
Seattle can be reached from New York by either stopping in
Washington DC or Chicago and there is no direct connection between these two cities. 
These two options correspond to two successful
derivations 
for 
rule $r_1$ with $X {=} n$, $Y {=} s$, and $Z {=} w$ (or $Z{=}c$, respectively).
In the provenance graph, 
there are two
\emph{rule nodes} denoting these 
successful derivations of $\rel{Q}(n,s)$ by rule $r_1$. 
A derivation is successful if all goals in the body evaluate to true, i.e., a successful \emph{rule node} is connected to 
successful \emph{goal nodes} 
(e.g., $r_1$ is connected to $g_1^1$, the $1^{st}$ goal in the rule's body). 
A positive (negated) goal is successful 
if the corresponding tuple is (is not) in the database.
Thus, a successful goal node
is connected to the node corresponding to the existing (green) or 
missing (red) tuple justifying the goal, respectively.
\end{Example}


Supporting negation and missing answers is quite challenging, because we need to enumerate all potential ways of deriving a missing answer (or intermediate result corresponding to a negated subgoal) and explain why each of these derivations has failed. An important question in this respect is how to bound the set of missing answers to be considered. 
Under the open world assumption, provenance would be infinite. Using the closed world assumption, only values that exist in the database or are postulated by the query are used to 
construct missing tuples. As is customary in Datalog, we refer to this set of values as the active domain $\adom{I}$ of a database instance $I$. 
We will revisit the assumption that all derivations with constants from $\adom{I}$ are meaningful 
later on.

\begin{Example}
The explanation for $\whynotq \rel{Q}(s,n)$
is shown in  Fig.\,\ref{fig:exam-pg-whynot-chicago-seattle}, 
i.e., why it is not true that New York is reachable from Seattle with exactly one transfer, but not directly.
The tuple $\rel{Q}(s,n)$ is missing from the query result
because all potential ways of deriving this tuple through the rule $r_1$ have failed.
In this example, $\adom{I} {=} \{c,n,s,w\}$ and, thus, there exist four failed derivations of $\rel{Q}(s,n)$ choosing either of these cities as the intermediate stop between Seattle and New York.
A rule derivation fails if at least one goal in the body evaluates to false.
In the provenance graph, only failed goals are connected to the failed rule
derivations explaining missing answers.
Failed positive goals in the body of a failed rule are explained by missing
tuples (red \emph{tuple nodes}). 
For instance, we cannot reach New York from Seattle with an intermediate stop in
Washington DC (the first failed
rule derivation from the left in Fig.\,\ref{fig:exam-pg-whynot-chicago-seattle}) 
because there exists no 
connection from Seattle to Washington DC 
(a \emph{tuple node} $\rel{T}(s,w)$ in red), 
and Washington DC to New York 
(a \emph{tuple node} $\rel{T}(w,n)$ in red). Note that 
the successful goal $\dlNeg \rel{T}(s,n)$ (there is no direct connection from Seattle to New York) does not contribute to the failure of this derivation and, thus, is not part of the explanation.
A failed negated goal is explained by an existing tuple in the database.  
That is, if a tuple $(s,n)$ would exist in the $\rel{Train}$ relation, 
then an additional failed \emph{goal node} $g_1^3 (s,n)$ would be part of the explanation and be connected to each failed rule derivation. 
\end{Example}





\mypartitle{Overview and Contributions}
\label{sec:contribution}
\BG{\emph{Provenance games}~\cite{KL13}, a game-theoretical formalization of provenance for first-order queries 
(i.e., non-recursive Datalog with negation), satisfies these desiderata. 
However, a provenance game for a query $Q$ encodes the provenance of every result and every missing answer of $Q$.}
\BG{Furthermore, the provenance game approach~\cite{KL13} requires the evaluation of a recursive Datalog program with negation under well-founded semantics over a game graph that to construct the provenance game.}
\emph{Provenance games}~\cite{KL13}, a game-theoretical formalization of provenance for first-order (FO) queries,
also supports queries with negation. 
However, 
the approach is computationally expensive, because it requires instantiation of a provenance graph explaining all answers and missing answers.
For instance,
the provenance graph produced by this approach 
for our 
toy example already contains  
more than $64$ (=$4^3$) nodes (i.e., only counting nodes corresponding to rule 
derivations), 
 because there are $4^3$ ways of binding values from $\adom{I} {=} \{c,n,s,w\}$  
to the $3$ variables ($X$, $Y$, and $Z$) of the rule $r_1$.
Typically, most of the nodes 
will not end up being part of the explanation for the user's provenance question.
%
To efficiently compute the explanation, 
we introduce 
a new 
Datalog program 
which computes part of 
the provenance graph 
of an explanation 
bottom-up.
Evaluating this program over instance $I$ 
returns the edge relation of an explanation. 

The main driver of our approach is 
a rewriting of Datalog rules that captures successful and failed rule derivations. 
This rewriting replaces the rules of the program with so-called \textit{firing rules}. 
Firing rules for positive queries were first introduced in~\cite{kohler2012declarative}. These rules are similar to other query instrumentation techniques that have been used for provenance capture such as the rewrite rules of Perm~\cite{GM13}. One of our major contributions is to extend this concept for negation and failed rule derivations which is needed to support Datalog with negation 
and missing answers.
Firing rules provide sufficient information for constructing explanations. However, to make this approach efficient, we need to avoid capturing rule derivations that will not contribute an explanation, 
i.e., they are not connected to 
the nodes corresponding to the provenance question 
in the provenance graph. We achieve this by propagating information from the user's provenance question  
throughout the query 
to prune rule derivations early on 
1) if they do not agree with the constants in the question 
or 2) if we can determine that based on their success/failure status they cannot be part of the explanation. 
For instance, in our 
running example, $\rel{Q}(n,s)$ can only be
connected to successful derivations 
of the rule $r_1$ with $X {=} n$ and $Y {=} s$.
We have presented a proof-of-concept version of our approach as a poster~\cite{LS16}.
%
%
Our main contributions are: 
\begin{itemize}
\item We introduce a provenance graph model for full first-order (FO)
  queries, expressed  as \emph{non-recursive Datalog queries with negation}
  (or \emph{Datalog} for short).
\item We extend the concept of firing rules to support negation and missing answers. 
\item We present an efficient method for computing explanations
      to provenance questions. 
      Unlike the solution in~\cite{KL13}, our approach
      avoids unnecessary work by focusing the computation on 
      relevant parts of the provenance graph.  
 \item We prove the correctness 
       of our algorithm that computes the 
       explanation to a provenance question. 
\item We present a full implementation of our approach in the GProM~\cite{AG14} system.
      Using this system, we compile Datalog into relational algebra expressions, 
      and translate the expressions 
      into SQL code that
      can be executed by a standard relational database backend.
\end{itemize}

The remainder of this paper is organized as follows.
We formally  define the problem in Sec.\,\ref{sec:probl-defin-backgr}, 
 discuss related work in Sec.\,\ref{sec:rel-work}, and
present our approach for computing explanations 
 in Sec.\,\ref{sec:compute-gp}.
We then discuss our implementation (Sec.\,\ref{sec:transl-into-relat}), present 
experiments (Sec.\,\ref{sec:experiments}), and conclude in Sec.\,\ref{sec:concl}.


\section{Problem Definition}
\label{sec:probl-defin-backgr}

We now formally define the problem addressed in this work: how to find the subgraph of a 
provenance graph for a given query (input program) 
$P$ and instance $I$ that explains existence/absence of a 
tuple in/from the result of $P$. 

\subsection{Datalog}
\label{sec:datalog}

A Datalog program $P$ consists of a finite set of rules $r_i: \rel{R}(\vec{X}) \dlImp \rel{R_1}(\vec{X_1}), \ldots,$ $\rel{R_n}(\vec{X_n})$ where $\vec{X_j}$ denotes a tuple of variables and/or constants. 
We assume that the rules of a program 
are labeled $r_1$ to $r_m$. $\rel{R}(\vec{X})$ is the \emph{head} of
the rule, denoted $\headOf{r_i}$, and $\rel{R_1}(\vec{X_1}), \ldots,
\rel{R_n}(\vec{X_n})$ is the \emph{body} (each $\rel{R_j}(\vec{X_j})$
is a \emph{goal}). 
We use $\varsOf{r_i}$ to denote 
the set of variables in $r_i$. 
In this paper, consider non-recursive Datalog with negation, so
 goals $\rel{R_j}(\vec{X_j})$ in the body are \emph{literals}, i.e.,
 atoms $\rel{A}(\vec{X_j})$ or their negation $\neg
 \rel{A}(\vec{X_j})$.  Recursion is not allowed. 
All rules $r$ of a program have to be \emph{safe}, i.e., every
variable in $r$ must occur positively in $r$'s body (thus, head
variables and variables in negated goals must also occur in a positive goal).
For example, Fig.\,\ref{fig:running-example-db} shows a Datalog query with a single rule $r_1$. 
Here, 
$\headOf{r_1}$ is $\rel{Q}(X,Y)$ and 
$\varsOf{r_1}$ is $\{X,Y,Z\}$. The rule is safe since the head
variables and the variables in the negated goal also occur positively
in the body ($X$ and $Y$ in both cases).
The set of relations in the schema 
 over which $P$ is defined is referred to as the extensional database
 (EDB), while relations defined through rules in $P$ form the
 intensional database (IDB), i.e., the IDB relations are those defined in the head of rules.
We require that 
$P$ has a distinguished IDB relation $Q$, called the \emph{answer} relation. Given 
$P$ and instance $I$, we use $P(I)$ to denote the result of $P$ evaluated over $I$. Note that $P(I)$ includes the instance $I$, i.e., all EDB atoms that are true in 
$I$. 
For an EDB or IDB predicate $R$, we use $R(I)$ 
to denote the instance of $R$ computed by $P$ and $R(t) \in P(I)$ to denote that $t \in R(I)$ according to $P$.

We use $\adom{I}$ to denote the active domain of instance $I$, i.e., the set of all constants that occur in $I$. Similarly, we use $\adom{\rel{R.A}}$ to denote the active domain of attribute $A$ of relation $\rel{R}$.
In the following, we make use of the concept of a rule derivation. 
A \emph{derivation} 
of a rule $r$ is an assignment of variables in $r$ to constants from $\adom{I}$.
For a rule with $n$ variables, we use $r(c_1, \ldots, c_n)$ to denote the 
derivation 
that is the result of binding $X_i {=} c_i$. We call a derivation 
\textit{successful} wrt.\ an instance $I$ if each atom 
in the body of the rule is true in $I$ and \textit{failed} otherwise. 

\subsection{Negation and Domains}
\label{sec:domains}

To be able to explain why a tuple is missing, 
we have to enumerate all failed derivations of this tuple and, for each such derivation, explain why it failed. 
As mentioned in Sec.~\ref{sec:intro}, the question is what is a feasible set of potential answers to be considered as missing. 
While the size of why-not provenance is typically infinite under the open world assumption, 
we have to decide how to bound the set of missing answers in the closed world assumption. 
We propose a simple, yet general, solution 
by assuming that each attribute 
of an IDB or 
EDB relation has an associated domain. 

\begin{Definition}[Domain Assignment]
\label{def:definition1}
  Let $S = \{\rel{R_1}, \ldots, \rel{R_n}\}$ be a database schema where each $\rel{R_i(A_{1}, \ldots, A_{m})}$ is a relation schema. Given an instance $I$ of $S$, a \emph{domain assignment} $\domA$ is a function that associates with each attribute $\rel{R.A}$ a domain of values.
We require 
$\domA(\rel{R.A}) \supseteq \adom{\rel{R.A}}$. 
\end{Definition}

In our approach, the user specifies each $\domA(\rel{R.A})$ as a query 
$\domA_{\rel{R.A}}$ that returns the set of admissible values for the domain of attribute $\rel{R.A}$.
We provide reasonable defaults to avoid forcing the user to specify $\domA$ for every attribute, e.g., $\domA(\rel{R.A}) {=} \adom{\rel{R.A}}$ for unspecified $\domA_{\rel{R.A}}$. 
These associated domains fulfill two purposes: 1) to reduce the size of explanations 
and 2) to avoid semantically meaningless answers.
For instance, if there would exist another attribute $\rel{Price}$ in the relation $\rel{Train}$ in Fig.\,\ref{fig:running-example-db}, 
then $\adom{I}$ would also include all the values that appear in this attribute. 
Thus, 
some failed rule derivations 
for $r_1$ would assign prices to the variable representing intermediate stops.
Different attributes may represent the same type of entity (e.g., $\rel{fromCity}$ and $\rel{toCity}$ in our example) and, thus, it would make sense to use their combined 
domain values when constructing missing answers. 
For now, we leave it up to the user to specify attribute domains. Using techniques for discovering semantic relationships among attributes to automatically determine feasible attribute domains is an interesting avenue for future work.

When defining provenance graphs in the following, we are only interested in rule derivations that use constants from the associated domains of attributes accessed by the rule. Given a rule $r$ and variable $X$ used in this rule, let $\attrsOf{r,X}$ denote the set of attributes that variable $X$ is bound to in the body of the rule. For instance, in Fig.\,\ref{fig:running-example-db}, $\attrsOf{r_1,Z} {=} \{\rel{Train.fromCity}, \rel{Train.toCity}\}$. 
We say a rule derivation 
$r(c_1, \ldots, c_n)$ is \emph{domain grounded} iff $c_i \in \bigcap_{A \in \attrsOf{r,X_i}} \domA(A)$ for all $i \in \{1, \ldots, n\}$. 

\subsection{Provenance Graphs}
\label{sec:provenance-graphs}

Provenance graphs 
justify the existence or absence of 
a query result 
based on the success or 
failure of derivations 
using a query's rules, respectively. 
They also explain how the existence or absence of tuples in the database caused derivations to succeed or 
fail, respectively. 
Here, 
we present a constructive definition of provenance graphs that provide this type of justification. Nodes in these graphs  carry two types of labels: 1) a label that determines the node type (tuple, rule, or goal) and additional information, e.g., the arguments and rule identifier of a derivation; 2) the success/failure status of nodes.  

\begin{Definition}[Provenance Graph]\label{def:prov=graph}
  Let $P$ be a first-order (FO) 
query, 
$I$ a database instance, $\domA$ a domain assignment 
for $I$, and $\stringDom$ the domain containing all strings. The \emph{provenance graph} $\provGraph(P,I)$ 
is a graph $(V,E,\nodeLabel,\successLabel)$ with nodes $V$, edges $E$, and node labelling functions $\nodeLabel: V \to \stringDom$ and  $\successLabel: V \to \{\greenT, \redF\}$.
We require that $\forall v,v' \in V: \nodeLabel(v) = \nodeLabel(v') \rightarrow v = v'$.
$\provGraph(P,I)$ is defined as follows: 
\begin{compactitem}
  \item \textbf{Tuple nodes:} For each n-ary EDB or IDB predicate $R$ and tuple $(c_1, \ldots,c_n)$ of constants from the associated domains ($c_i \in \domA(\rel{R.A_i})$), there exists a node $v$ labeled $R(c_1, \ldots,c_n)$. $\successLabel(v) = \greenT$ iff $R(c_1, \ldots,c_n) \in P(I)$  
    and $\successLabel(v) = \redF$ otherwise.
  \item \textbf{Rule nodes:} For every successful domain grounded derivation $r_i(c_1, \ldots, c_n)$, there exists a node $v$ in $V$ labeled $r_i(c_1, \ldots,c_n)$ with $\successLabel(v) = \greenT$. For every failed domain grounded derivation $r_i(c_1, \ldots, c_n)$ where $head(r_i$ $(c_1,\ldots,c_n)) \not\in P(I)$, there exists a node $v$ as above but with $\successLabel(v) = \redF$. In both cases, $v$ is connected to the tuple node $\headOf{r_i(c_1, \ldots, c_n)}$.
  \item \textbf{Goal nodes:} Let $v$ be the node corresponding to a derivation $r_i(c_1, \ldots, c_n)$ with $m$ goals. If $\successLabel(v) = \greenT$, then for all $j \in \{1,\ldots,m\}$, $v$ is connected to a goal node $v_j$ labeled $g_i^j$  with $\successLabel(v_j) = \greenT$. 
If $\successLabel(v) = \redF$, then for all $j \in \{1,\ldots,m\}$, $v$ is connected to a goal node $v_j$ with $\successLabel(v_j) = \redF$
if the $j^{th}$ goal is failed in $r_i(c_1, \ldots, c_n)$. 
Each goal is connected to the corresponding tuple node. 
  \end{compactitem}
\end{Definition}
%
Our provenance graphs model query evaluation by construction. 
A tuple node $R(t)$ is successful in $\provGraph(P,I)$ iff $R(t) \in P(I)$. This is guaranteed, because each tuple built from values of the associated 
domain 
exists as a node $v$ in the graph and its label $\successLabel(v)$ is decided based on $R(t) \in P(I)$.  Furthermore, there exists a successful rule node $r(\vec{c}) \in \provGraph(P,I)$ iff the derivation $r(\vec{c})$ succeeds for $I$. Likewise, 
a failed rule node $r(\vec{c})$ exists iff the derivation $r(\vec{c})$ is failed over $I$ and $head(r(\vec{c})) \not\in P(I)$. 
Fig.\,\ref{fig:exam-pg-why-NY-seattle} and\,\ref{fig:exam-pg-whynot-chicago-seattle} show subgraphs of $\provGraph(P,I)$ for the query 
from Fig.\,\ref{fig:running-example-db}. Since $\rel{Q}(n,s) \in P(I)$ (Fig.\,\ref{fig:exam-pg-why-NY-seattle}), this tuple node is connected to all successful 
 derivations with $\rel{Q}(n,s)$ in the head which in turn are connected to goal nodes for each of the three goals of rule $r_1$. 
$\rel{Q}(s,n) \notin P(I)$ (Fig.\,\ref{fig:exam-pg-whynot-chicago-seattle}) and, thus, its node is connected to all failed derivations 
with $\rel{Q}(s,n)$ as a head. Here, we have 
assumed that all cities can be considered as starting and end points of missing train connections, i.e., both $\domA(\rel{T.fromCity})$ and $\domA(\rel{T.toCity})$ are defined as 
$\adom{\rel{T.fromCity}} \cup \adom{\rel{T.toCity}}$. 
Thus, we have considered derivations $r_1(s,n,Z)$ for $Z \in \{c,n,s,w\}$. 

An important characteristic of our provenance graphs is that each node $v$ in a graph is uniquely identified by its label $\nodeLabel(v)$. 
Thus, common subexpressions are shared leading to more compact provenance graphs. For instance, observe that the node $g_1^3(n,s)$ is shared by two rule nodes in the explanation shown in Fig\,\ref{fig:exam-pg-why-NY-seattle}.

\subsection{Questions and Explanations}
\label{sec:problem-definition}


Recall that the problem we address in this work is how to explain the existence or absence
of (sets of) tuples using provenance graphs.
Such a set of tuples is called a \textit{provenance question} (\textit{PQ}) in this paper. 
The two questions presented in Example\,\ref{ex:example1} use constants only,
but we also support provenance questions with variables, 
e.g., for a question $\rel{Q}(n,X)$ we would return all explanations for existing or missing
tuples where the first attribute is $n$, i.e., why or why-not a city $X$
can be reached from New York with one transfer, but not directly.
We say a tuple $t'$ of constants  \emph{matches} a tuple $t$ of variables and constants written as  $t' \matches t$
if we can unify $t'$ with $t$, i.e., we can equate $t'$ with $t$ by applying a valuation 
that substitutes variables
in $t$ with constants from $t'$.

\begin{Definition}[Provenance Question]\label{def:question}
Let $P$ be a 
query, 
$I$ an instance, and $Q$ an IDB predicate.
A \emph{provenance question} $\provQ$ is an atom $Q(t)$ where $t = (v_1, \ldots, v_n)$ is a
tuple consisting of variables and constants from the associated domain ($\domA(\rel{Q.A})$ for each attribute $\rel{Q.A}$).
$\whyq Q(t)$ and $\whynotq Q(t)$ restrict the question to 
existing and 
missing tuples 
$t' \matches t$, respectively.
\end{Definition}

In Example\,\ref{ex:example1}, we have presented subgraphs of $\provGraph(P,I)$ 
as \textit{explanations} for \provQ s,
implicitly claiming that these subgraphs are sufficient for explaining the $\provQ$. 
Below, we formally define this type of explanation. 

\begin{Definition}[Explanation]\label{def:explanation}
The \emph{explanation} $\explainq(P,Q(t),I)$ for 
$Q(t)$ (PQ)
according to $P$ and $I$, is the subgraph of $\provGraph(P,I)$ containing
only nodes that are connected to at least one node $Q(t'$) where $t' \matches t$. 
For $\whyq Q(t)$, only existing tuples $t'$ 
are considered to match $t$. 
For $\whynotq Q(t)$ only missing tuples are considered to match $t$.
\end{Definition}

Given this definition of explanation, note that 1) all nodes connected to a tuple node matching the $\provQ$ are relevant for computing this tuple and 2) only nodes connected to this node are relevant for the outcome. 
Consider 
$t'$ 
where $t' \matches t$ for a $Q(t)$ ($\provQ$). 
If $Q(t') \in P(I)$, then all successful derivations with head $Q(t')$ justify the existence of $t'$ and 
these are precisely the rule nodes connected to $Q(t')$ in $\provGraph(P,I)$. 
If $Q(t') \not\in P(I)$, then all derivations with head $Q(t')$ have failed and 
are connected to $Q(t)$ in the provenance graph. Each such derivation is connected to all of its failed goals which are responsible for the failure. 
Now, if a rule body references IDB predicates, 
then the same argument can be applied to reason that all rules directly connected to these tuples 
explain why they (do not) exist. 
Thus, by induction, the explanation contains all relevant tuple and rule nodes that explain the $\provQ$. 

\section{Related Work}
\label{sec:rel-work}


Our provenance graphs have strong connections to other provenance models 
for relational queries, most importantly provenance games and the semiring framework, 
and to approaches for explaining missing answers. 

\mypartitle{Provenance Games}
Provenance games
~\cite{KL13}
model the evaluation of a given query (input program) $P$ over an instance $I$ as a 2-player game in a way that
resembles SLD(NF) resolution. 
If the
position (a node in the game graph) corresponding to a tuple $t$ is won 
(the player starting in this position has a winning strategy), then $t \in P(I)$ 
and if the position is lost, then $t \not\in P(I)$. By virtue of supporting negation, provenance games can uniformly answer why and why-not questions for queries with negation. 
%
%
However, provenance games may be hard to comprehend
for non-power users as they require some background in game theory to be understood, e.g., the won/lost status of derivations in a provenance game is contrary to what may be intuitively expected. 
That is, a rule node is lost if the derivation is successful. 
The status of rule nodes in our provenance graphs matches the intuitive expectation (e.g., the rule node is successful if the derivation exists).  
K\"ohler et al.~\cite{KL13} also present an algorithm that computes 
the provenance game for 
$P$ and 
$I$. However, this approach requires instantiation of the full game graph (which 
enumerates all existing and missing tuples) and evaluation of a recursive Datalog$^\neg$ program over this graph using the well-founded semantics~\cite{FK97}.  In constrast, our approach computes explanations that are succinct subgraphs containing only relevant provenance. 
We use bottom-up evaluation instrumented with firing rules to capture provenance. 
Furthermore, we enable the user to restrict provenance for missing answers. 

\mypartitle{Database Provenance}
Several provenance models for 
database queries have been introduced in related work, e.g., see~\cite{cheney2009provenance,grigoris-tj-simgodrec-2012}). 
The semiring annotation framework generalizes these models for positive relational algebra (and, thus, positive non-recursive Datalog). In this model, tuples in a relation are annotated with elements from a commutative semiring K. 
An essential property of the K-relational model is that semiring $\ProvPoly$, the semiring of \emph{provenance polynomials},  
 generalizes all other semirings.
It has been shown in~\cite{KL13} that provenance games generalize $\ProvPoly$ for positive queries and, thus, all other provenance models expressible as semirings. 
Since our graphs are equivalent to provenance games in the sense that there exist lossless transformations between both models (the discussion is beyond the scope of this paper), our graphs also encode $\ProvPoly$.
Provenance graphs which are similar to our graphs 
restricted to positive queries have been used as graph representations of semiring provenance
(e.g., see~\cite{DG15c,DM14c,grigoris-tj-simgodrec-2012}).
Both our graphs and the boolean circuits representation of semiring provenance~\cite{DM14c} explicitly share common subexpressions in the provenance.
However, while these circuits support recursive queries, they do not support negation.
Exploring the relationship of provenance graphs for queries with negation and m-semirings (semirings with support for set difference) is an interesting avenue for future work.
Justifications for 
logic programs~\cite{PS09} are also 
closely related.



\mypartitle{Why-not and Missing Answers}
Approaches for explaining missing answers 
 can be classified based on whether they explain a missing answer 
based on the query
~\cite{BH14a,BH14,CJ09,TC10} (i.e., which operators filter out tuples that would have contributed to the missing answer) 
or 
based on the input data~\cite{HH10,huang2008provenance} (i.e., what tuples need to be inserted into the database to turn the missing answer into an answer). The missing answer problem was first stated for query-based explanations in the seminal paper by Chapman et al.~\cite{CJ09}. Huang et al.~\cite{huang2008provenance} first introduced an instance-based approach. Since then, several techniques have been developed to exclude spurious explanations, to support larger classes of queries~\cite{HH10},  
 and to support distributed Datalog systems in Y!~\cite{WZ14a}.
The approaches for instance-based explanations (with the exception of Y!) have in common that they treat the missing answer problem as a view update problem: the missing answer is a tuple that should be inserted into a view corresponding to the query and this insert has to be translated to an insert into the database instance. An explanation is then one particular solution to this view update problem. 
In contrast to these previous works, our provenance graphs 
explain missing answers 
by enumerating all failed rule derivations that justify why  the answer 
is not in the result.
Thus, they are arguably a better fit for use cases such as debugging queries, where in addition
 to determining which missing inputs justify a missing answer, the user also needs to understand why derivations have failed. Furthermore, we do support queries with negation.
Importantly, solutions for view update missing answer problems 
can be extracted from our provenance graphs. 
Thus, in a sense, provenance graphs with our approach generalize some of the previous approaches 
(for the class of queries supported, 
e.g., we  do not support aggregation yet). 
Interestingly, recent work has shown that it may be possible to generate 
more concise summaries of provenance games~\cite{GK15,RK14} 
which is particularly useful for negation and missing answers 
to deal with the potentially large size of the resulting provenance. 
Similarly, some missing answer approaches~\cite{HH10} use c-tables to compactly represent sets of missing answers.
These approaches are complementary to our work.




\mypartitle{Computing Provenance Declaratively}
The concept of rewriting a Datalog program using firing rules to capture provenance as variable bindings of rule derivations 
was introduced by K\"ohler et al.~\cite{kohler2012declarative} for provenance-based debugging of positive Datalog queries. 
These rules are also similar to relational implementations of provenance polynomials in Perm~\cite{GM13}, LogicBlox~\cite{GA12}, and Orchestra~\cite{GK07a}. 
Zhou et al.~\cite{ZS10} leverage such rules for the distributed ExSPAN system using either full propagation or reference based provenance. 
An extension of firing rules for negation is the main enabler of our approach. 
\section{Computing Explanations} 
\label{sec:compute-gp}

Recall from Sec.~\ref{sec:intro} 
that our approach 
generates a new Datalog program $\GPProg(P,Q(t),I)$ by rewriting a given query (input program) $P$ 
to return the edge relation of explanation 
$\explainq(P,Q(t),I)$ 
for a provenance question $Q(t)$ ($\provQ$). 
In this section, we explain how to generate the program $\GPProg(P,Q(t),I)$ 
using the following steps:

\mypara{1} We unify the input program $P$ with the $\provQ$ 
	   by propagating the constants in $t$ top-down throughout the program 
	   to be able to later prune irrelevant rule derivations.

\mypara{2} Afterwards, we determine for which nodes in the graph we can infer 
	   their success/failure status 
	   based on the $\provQ$. 
           We model this information as annotations on heads and goals of rules 
 and propagate these annotations top-down. 

\mypara{3} Based on the annotated and unified version 
	   created in the previous
           steps, we generate \textit{firing rules} that capture variable bindings for
           successful and failed rule derivations.

\mypara{4} 
	   To be in the result of one of 
           the firing rules obtained in the previous step is a necessary, 
	   but not sufficient, condition for the corresponding $\provGraph(P,I)$ 
           fragment 
	   to be connected to a node matching the $\provQ$. 
           To guarantee that only relevant fragments are returned,
           we introduce additional rules that check connectivity to confirm whether each fragment is connected.

\mypara{5} Finally, we create rules that generate the edge relation of the provenance graph 
	   (i.e., $\explainq(P,Q(t),I)$)
           based on the rule binding information that the firing rules have captured. 

In the following, we will explain each step in detail and illustrate its
application based on the 
question $\whyq \rel{Q}(n,s)$ 
from Example\,\ref{ex:example1}, i.e., 
why is New York connected to Seattle via a train connection with one intermediate stop, but is not directly connected to Seattle. 






\subsection{Unify the Program with PQ}
\label{sec:unify-program-with}

The node $\rel{Q}(n,s)$ 
in the provenance graph (Fig.\,\ref{fig:exam-pg-why-NY-seattle}) 
is only connected to rule derivations which return
$\rel{Q}(n,s)$. 
For instance, if variable $X$ is bound to another city $x$ (e.g., Chicago) in a
derivation of the rule $r_1$, then this rule cannot return the tuple $(n,s)$.  
This reasoning can be applied recursively to replace variables in rules with
constants. That is, we unify the rules in the program top-down with the $\provQ$. 
This step may produce multiple duplicates of a rule with different bindings. 
We use superscripts to make explicit the variable binding used by a replica of a rule.

\begin{Example}
Given the question $\whyq \rel{Q}(n,s)$, 
we unify the single rule
$r_1$ using the assignment $(X{=}n,Y{=}s)$:
\begin{align*}
r_1^{(X=n,Y=s)}:  \rel{Q}(n,s) &\dlImp \rel{T}(n,Z), \rel{T}(Z,s), \dlNeg \rel{T}(n,s)
\end{align*}

\end{Example}

 We may have to create multiple partially
unified versions of a rule or an EDB atom.  
For example, to explore successful derivations of $\rel{Q}(n,s)$ we are interested in both train connections from New York to some city ($\rel{T}(n,Z)$) and from any city to Seattle ($\rel{T}(Z,s)$). Furthermore, we need to know whether there is a direct connection from New York to Seattle ($\rel{T}(n,s)$).
The general idea of this step is motivated by~\cite{BMSU86} which introduced ``magic sets'', an approach for rewriting logical rules to cut down irrelevant facts 
by using additional predicates called ``magic predicates''. Similar techniques exist in standard relational optimization under the name of predicate move around.
We use the technique of propagating variable bindings in the query
to restrict the computation 
based on the user's interest.
This approach is correct because if we bind a variable in the head of rule, then only rule derivations that agree with this binding can derive tuples that agree with this binding.
Based on this unification step, 
we know which 
bindings may produce fragments of $\provGraph(P,I)$ 
that are relevant 
for explaining  the $\provQ$. 
The algorithm implementing this step is given as Algorithm\,\ref{alg:unify-program}.
%
\begin{algorithm}[t]
{\small
  \begin{algorithmic}[1]
    \Procedure{UnifyProgram}{$P$, $Q(t)$}
      \State $todo \gets [Q(t)]$
      \State $done \gets \{\}$
      \State $\unProg{P} = []$
      \While {$todo \neq []$}
        \State $a \gets \Call{pop}{todo}$
        \State $\Call{insert}{done,a}$ 
        \State $rules \gets \Call{getRulesForAtom}{P,a}$
        \ForAll {$r \in rules$}
          \State $unRule \gets \Call{unifyRule}{r,a}$
          \State $\unProg{P} \gets \unProg{P} \listconcat unRule$
          \ForAll {$g \in \bodyOf{unRule}$}
            \If {$g \not\in done$}
              $todo \gets todo \listconcat g$
            \EndIf
          \EndFor
        \EndFor
      \EndWhile
      \State \Return $\unProg{P}$
    \EndProcedure
  \end{algorithmic}
}
  \caption{Unify Program With $\provQ$}\label{alg:unify-program}
\end{algorithm}



\subsection{Add Annotations based on Success/Failure}
\label{sec:part-inst-game}

For $\whyq$ and $\whynotq$ questions, we only consider tuples that are existing and 
missing, respectively. 
Based on this information, we can infer restrictions on the success/failure status of nodes in 
the provenance graph
that are connected to $\provQ$ 
node(s) (belong to the explanation).
We store these restrictions as annotations $\greenT$, $\redF$, and $\redF/\greenT$
on heads and goals of rules.
Here, $\greenT$ indicates that we are only interested in successful nodes, $\redF$ that we are only interested in failed nodes, and $\redF/\greenT$
that we are interested in both. 
These annotations are determined using a top-down propagation seeded with the $\provQ$.

\begin{Example} \label{ex:example3}
Continuing with our running example question $\whyq \rel{Q}(n,s)$, we know that $\rel{Q}(n,s)$ 
  is successful because the tuple 
 is in the result 
(Fig.\,\ref{fig:running-example-db}). 
This implies that only successful rule nodes and their successful goal nodes can be connected to this tuple node. 
Note that this annotation does not imply that the rule $r_1$ 
would be successful for every $Z$ 
(i.e., every intermediate stop between New York and Seattle). 
It only indicates that it is sufficient to focus on successful rule derivations 
since failed ones cannot be connected to $\rel{Q}(n,s)$. 
\begin{align*}
r_1^{(X=n,Y=s), \greenT}: \rel{Q}(n,s)^{\greenT} \dlImp \rel{T}(n,Z)^{\greenT}, \rel{T}(Z,s)^{\greenT}, \dlNeg \rel{T}(n,s)^{\greenT}
\end{align*}
%
We now propagate the annotations of the goals in $r_1$ throughout the program. 
That is, for any goal that is an IDB predicate, we propagate its annotation 
to the head of all rules deriving the goal's predicate 
and, then, propagate these annotations  
to the corresponding rule bodies.  
Note that the inverted 
annotation is propagated for negated goals. For instance, if $\rel{T}$ would be an IDB predicate, then the annotation on the goal
 $\dlNeg \rel{T}(n,s)^{\greenT}$ would be propagated as follows.  We would  annotate the head of all rules deriving $\rel{T}(n,s)$ with $\redF$, because $\rel{Q}(n,s)$ can only exist if $\rel{T}(n,s)$ does not exist.
%
\end{Example}

Partially unified atoms (such as $\rel{T}(n,Z)$) may occur in both negative and positive goals of the rules of the program. 
We denote such atoms  
%
%
 using a $\redF/\greenT$ annotation. 
The use of these annotations will become more clear in the next subsection when
we introduce firing rules.
The pseudocode for the algorithm that determines these annotations 
is given as Algorithm\,\ref{alg:annot-program}. 
In short:
\smallskip
\begin{compactenum}
\item Annotate the head of all rules deriving tuples matching 
the question with $\greenT$ (why) or $\redF$ (why-not).
\item Repeat the following steps until a fixpoint is reached:
  \begin{compactenum}
  \item Propagate the annotation of a rule head to goals in the rule body as follows: propagate $\greenT$ for $\greenT$ annotated heads and $\redF/\greenT$ for $\redF$ annotated heads.
  \item For each annotated goal in the rule body, propagate its annotation to all rules that have this atom in the head. 
For negated goals, unless the annotation is $\redF/\greenT$, we propagate the inverted annotation (e.g., $\redF$ for $\greenT$) to the head of rules deriving the goal's predicate.
  \end{compactenum}
\end{compactenum}
\smallskip
\begin{algorithm}[t]
{\small
  \begin{algorithmic}[1]
    \Procedure{AnnotProgram}{$\unProg{P}$, $Q(t)$}
      \State $state \gets \qType(Q(t))$
      \State $todo \gets [Q(t)^{state}]$
      \State $done \gets \{\}$
      \State $\adProg{P} = []$
      \While {$todo \neq []$}
        \State $a \gets \Call{pop}{todo}$
        \State $state \gets \qType(a)$
        \State $\Call{insert}{done,a}$ 
        \State $rules \gets \Call{getUnRulesForAtom}{P,a}$
        \ForAll {$r \in rules$}
          \State $annotRule \gets \Call{annotRule}{r,state}$
          \State $\adProg{P} \gets \adProg{P} \listconcat annotRule$
          \ForAll {$g \in \bodyOf{annotRule}$}
              \If {state = \redF}
                  \State $newstate \gets \redF/\greenT$
              \Else
                  \State $newstate \gets state$
              \EndIf
              \If {\Call{isNegated}{g}}
              \State $newstate \gets \Call{switchState}{state}$
              \EndIf
            \If {$g^{newstate} \not\in done \wedge \Call{isIDB}{g}$}
              \State $todo \gets todo \listconcat g^{newstate}$
            \EndIf
          \EndFor
        \EndFor
      \EndWhile
      \ForAll {$r \in \adProg{P}$}
         \If {$\qType(r) = \redF/\greenT$}
             \State $\adProg{P} \gets \Call{removeAnnotatedRules}{\adProg{P},r,\{\redF,\greenT\}}$
         \EndIf
      \EndFor
      \State \Return $\adProg{P}$
\EndProcedure
  \end{algorithmic}
}
  \caption{Success/Failure Annotations}\label{alg:annot-program}
\end{algorithm}



\subsection{Creating Firing Rules}
\label{sec:creat-firer-rules}


To be able to compute the relevant subgraph of $\provGraph(P,I)$ 
(the explanation) for the provenance question $\provQ$, 
we need to determine successful and/or failed rule derivations.
Each rule derivation paired with the information whether it is successful
over the given database instance (and which goals are failed in case it is not
successful) corresponds to a certain subgraph.
Successful rule derivations are always part of $\provGraph(P,I)$ 
for a given query (input program) $P$ 
whereas failed
rule derivations only appear 
if the tuple in the head 
failed,
i.e., there are no successful derivations of any rule with
this head. 
To capture the variable bindings of successful/failed rule derivations, we create
``firing rules''.
For successful rule derivations, a firing rule consists of the body of the rule 
(but using the firing version of each predicate in the body) 
and
a new head predicate that contains all variables used in the rule. 
In this way, the firing rule captures all the variable bindings of a rule derivation. 
Furthermore, for each IDB predicate $R$ that occurs as a head of a rule $r$, we create a firing rule that has the firing version of predicate $R$ in the head and a firing version of the rules $r$ deriving the predicate in the body. 
For EDB predicates, we create firing rules that have the firing version of the predicate in the head and the EDB predicate in the body.  
\begin{figure}[t]\centering
 $\,$\\[-3mm] 
  \begin{minipage}{1\linewidth}
    \begin{align*}
      \fire{Q}{}{\greenT}(n,s) &\dlImp \fire{r_1}{Q}{\greenT}(n,s,Z)\\[1mm]
\fire{r_1}{Q}{\greenT}(n,s,Z) &\dlImp \fire{T}{}{\greenT}(n,Z), \fire{T}{}{\greenT}(Z,s), \fire{T}{}{\redF}(n,s)\\[1mm]
\fire{T}{}{\greenT}(n,Z) &\dlImp \rel{T}(n,Z)\\
\fire{T}{}{\greenT}(Z,s) &\dlImp \rel{T}(Z,s)\\
\fire{T}{}{\redF}(n,s) &\dlImp \dlNeg \rel{T}(n,s)
    \end{align*}\\[-10mm]
    \caption{Example firing rules for $\whyq \rel{Q}(n,s)$}
    \label{fig:exam-fire-rules}
  \end{minipage}

\end{figure}

\begin{Example}
\label{ex:example4}
Consider the annotated program in Example\,\ref{ex:example3} for the question $\whyq \rel{Q}(n,s)$. 
We generate the firing rules shown in Fig.\,\ref{fig:exam-fire-rules}. 
The firing rule for $r_1^{(X=n,Y=s), \greenT}$ (the second rule from the top) 
is derived 
from the rule $r_1$ 
by adding $Z$
(the only existential variable) to the head, 
renaming the head predicate as $\fire{r_1}{Q}{\greenT}$, 
and replacing each goal with its firing version 
(e.g., $\fire{T}{}{\greenT}$ for the two positive goals and $\fire{T}{}{\redF}$ for the negated goal). 
Note that negated goals are replaced with firing rules that have inverted annotations
(e.g., the goal $\dlNeg \rel{T}(n,s)^{\greenT}$ is replaced with $\fire{T}{}{\redF}(n,s)$). 
Furthermore, we introduce firing rules for EDB tuples 
(the three rules from the bottom in Fig.\,\ref{fig:exam-fire-rules}) 
\end{Example}

As mentioned in Sec.~\ref{sec:rel-work}, firing rules for successful rule derivations have been used for declarative debugging of positive Datalog programs~\cite{kohler2012declarative} and, for non-recursive queries, are essentially equivalent to rewrite rules that instrument a query to compute provenance polynomials~\cite{AG14,GM13}.
We extend firing rules to support queries with negation and capture missing answers.
To construct a $\provGraph(P,I)$ 
fragment corresponding to a missing tuple, 
we need to find failed rule derivations with the tuple in the head 
and ensure that no successful derivations exist with this head 
(otherwise, we may capture irrelevant failed derivations of existing tuples).
In addition, we need to determine which goals are failed for each failed rule derivation because only failed goals are connected to the node representing the failed rule derivation in the 
provenance graph. 
To capture this information, we add additional boolean variables - $V_i$ for goal $g^i$ - to the head of a firing rule that record for each goal whether it failed or not.
The body of a firing rule for failed rule derivations is created by 
replacing every goal in the body with its $\redF/\greenT$ firing version, and
adding the firing version of the negated head 
to the body (to ensure that only bindings for missing tuples are captured). Firing rules capturing failed derivations use the $\redF/\greenT$ firing versions of their goals because not all goals of a failed derivation have to be failed and 
the failure status determines whether the corresponding goal node is part of the explanation.
A firing rule capturing missing IDB or EDB tuples may not be safe, i.e., it may contain variables that only occur in negated goals. In fact, these variables should be restricted to the associated domains for the attributes 
the variables are bound to. Since the associated domain $\domA$ 
for an attribute $\rel{R.A}$ is given as an unary query $\domA_\rel{R.A}$, we can use these queries directly in firing rules to restrict the values 
the variable is bound to. This is how we ensure that only missing answers 
formed from the associated domains are considered and that firing rules are always safe. 

\begin{figure}[t]\centering
 $\,$\\[-3mm]  
\begin{minipage}{1\linewidth}
    \begin{align*}
\fire{Q}{}{\redF}(s,n) &\dlImp \dlNeg \fire{Q}{}{\greenT}(s,n)\\
\fire{Q}{}{\greenT}(s,n) &\dlImp \fire{r_1}{Q}{\greenT}(s,n,Z)\\[1mm]
\fire{r_1}{Q}{\redF}(s,n,Z,V_1,V_2, \dlNeg V_3) &\dlImp \fire{Q}{}{\redF}(s,n), \fire{T}{}{\redF/\greenT}(s,Z,V_1),\\ 
                                             &\hspace{5mm}\fire{T}{}{\redF/\greenT}(Z,n,V_2),  \fire{T}{}{\redF/\greenT}(s,n,V_3)\\
\fire{r_1}{Q}{\greenT}(s,n,Z) &\dlImp \fire{T}{}{\greenT}(s,Z), \fire{T}{}{\greenT}(Z,n),
					\fire{T}{}{\redF}(s,n)\\[1mm]
\fire{T}{}{\redF/\greenT}(s,Z,\boolT) &\dlImp \fire{T}{}{\greenT}(s,Z)\\
\fire{T}{}{\redF/\greenT}(s,Z,\boolF) &\dlImp \fire{T}{}{\redF}(s,Z)\\
\fire{T}{}{\greenT}(s,Z) &\dlImp \rel{T}(s,Z)\\
\fire{T}{}{\redF}(s,Z) &\dlImp \domA_{\rel{T.toCity}}(Z), \dlNeg \rel{T}(s,Z)
    \end{align*}\\[-10mm]
    \caption{Example firing rules for $\whynotq \rel{Q}(s,n)$}
    \label{fig:exam-fire-negated}
  \end{minipage}

\end{figure}

\begin{Example}
\label{ex:example5}
Reconsider the question $\whynotq \rel{Q}(s,n)$  from Example\,\ref{ex:example1}. 
The firing rules generated for this question are shown in Fig.\,\ref{fig:exam-fire-negated}.  
We exclude the rules for the second goal 
$\rel{T}(Z,n)$ and the negated goal $\dlNeg \rel{T}(s,n)$ which are analogous to the rules 
for the first goal $\rel{T}(s,Z)$. 
Since $\rel{Q}(s,n)$ is failed (because tuple $(s,n)$, i.e., New York cannot be reachable from Seattle with exactly one transfer, is not in the result), 
we are only interested in failed rule derivations of the rule $r_1$ with $X{=}s$ and $Y{=}n$. 
Furthermore, each rule node in the provenance graph corresponding to such a rule derivation 
will only be connected to failed subgoals.
Thus, 
we need to capture which goals are successful or failed 
for each such failed derivation. 
This can be modelled through boolean variables $V_1$, $V_2$, and $V_3$ 
(since there are three goals in the body) 
that are true if the corresponding goal is successful and false otherwise. 
The firing version $\fire{r_1}{Q}{\redF}(s,n,Z,V_1,V_2,\dlNeg V_3)$ of $r_1$  will contain 
all variable bindings for derivations of $r_1$ such that $\rel{Q}(s,n)$ is the head 
(i.e., guaranteed by adding $\fire{Q}{}{\redF}(s,n)$ to the body), 
the rule derivations are failed, and the $i^{th}$ goal is successful or 
failed 
for this binding iff $V_i$ is true or 
false, respectively.
To produce all these bindings, we need rules capturing successful and failed tuple nodes 
for each subgoal of the rule $r_1$.
We denote such rules using a $\redF/\greenT$ annotation and use a boolean variable (true or false) to record 
whether a tuple exists
(e.g., $\fire{T}{}{\redF/\greenT}(s,Z,\boolT) \dlImp \fire{T}{}{\greenT}(s,Z)$ is 
one of these rules).
Negated goals are dealt with by negating this boolean variable, i.e., the goal is successful if the corresponding tuple does not exist.
For instance, $\fire{T}{}{\redF/\greenT}(s,n,false)$ 
represents the fact that tuple $\rel{T}(s,n)$ (a direct train connection from Seattle to New York) 
is missing. This causes the third goal of $r_1$ to succeed for any derivation where $X{=}s$ and $Y{=}n$.
For each partially unified EDB atom annotated with $\redF/\greenT$, we create four rules: one for existing tuples 
(e.g., $\fire{T}{}{\greenT}(s,Z) \dlImp \rel{T}(s,Z)$),
 one for the failure case (e.g., $\fire{T}{}{\redF}(s,Z) \dlImp \domA_{\rel{T.toCity}}(Z), \dlNeg \rel{T}(s,Z)$), and two for the $\redF/\greenT$ firing version. 
For the failure case, 
we use predicate $\domA_{\rel{T.toCity}}$ to only consider missing tuples $(s,Z)$ where $Z$ is a value from the associated domain of this attribute.
\end{Example}
\begin{algorithm}[t]
{\small
  \begin{algorithmic}[1]
    \Procedure{CreateFiringRules}{$\adProg{P}$, $Q(t)$}
    \State $\fireProg{P} \gets []$
    \State $state \gets typeof(Q(t))$
    \State $todo \gets [Q(t)^{state}]$
    \State $done \gets \{\}$
    \While {$todo \neq []$} \Comment{create rules for a predicate}
       \State $R(t)^{\adornment} \gets \Call{pop}{todo}$
       \State $\Call{insert}{done,R(t)^{\adornment}}$
       \If {\Call{isEDB}{$R$}}
       \State \Call{CreateEDBFiringRule}{$\fireProg{P},R(t)^{\adornment}$}
       \Else
       \State \Call{CreateIDBNegRule}{$\fireProg{P},R(t)^{\adornment}$}
       \State $rules \gets \Call{getRules}{R(t)^{\adornment}}$
       \ForAll {$r \in rules$} \Comment{create firing rule for $r$}
         \State $args \gets (\varsOf{r} - \varsOf{\headOf{r}})$
         \State $args \gets \argsOf{\headOf{r}} \listconcat args$
         \State \Call{CreateIDBPosRule}{$\fireProg{P},R(t)^{\adornment},r,args$}
       \State \Call{CreateIDBFiringRule}{$\fireProg{P},R(t)^{\adornment},r,args$}
       \EndFor
       \EndIf
    \EndWhile
    \State \Return $\fireProg{P}$
\EndProcedure
  \end{algorithmic}
}
  \caption{Create Firing Rules}\label{alg:firing-rules}
\end{algorithm}


\begin{algorithm}[h]

{\small

  \begin{algorithmic}[1]
    \Procedure{CreateEDBFiringRule}{$\fireProg{P}$, $R(t)^{\adornment}$}
           \State $[X_1,\ldots,X_n] \gets \varsOf{t}$
              \State $r_{\greenT}  \gets \fire{R}{}{\greenT}(t) \dlImp R(t)$
              \State $r_{\redF}  \gets \fire{R}{}{\redF}(t) \dlImp \domA({X_1}), \ldots, \domA({X_n}), \dlNeg R(t)$
              \State $r_{\redF/\greenT-1}  \gets \fire{R}{}{\redF/\greenT}(t,\boolT) \dlImp \fire{R}{}{\greenT}(t)$
              \State $r_{\redF/\greenT-2}  \gets \fire{R}{}{\redF/\greenT}(t,\boolF) \dlImp \fire{R}{}{\redF}(t)$
           \If {$\adornment = \greenT$}
              \State $\fireProg{P} \gets \fireProg{P} \listconcat r_{\greenT}$      
           \ElsIf {$\adornment = \redF$}
              \State $\fireProg{P} \gets \fireProg{P} \listconcat r_{\greenT} \listconcat r_{\redF}$      
           \Else
              \State $\fireProg{P} \gets \fireProg{P} \listconcat r_{\greenT} \listconcat r_{\redF} \listconcat r_{\redF/\greenT-1} \listconcat r_{\redF/\greenT-2}$      
           \EndIf
    \EndProcedure
  \end{algorithmic}

  \begin{algorithmic}[1]
    \Procedure{CreateIDBNegRule}{$\fireProg{P}$, $R(t)^{\adornment}$}
       \State $[X_1,\ldots,X_n] \gets \varsOf{t}$
       \If {$\adornment \neq \greenT$}   
           \State $r_{new}  \gets \fire{R}{}{\redF}(t) \dlImp \domA({X_1}), \ldots, \domA({X_n}), \dlNeg \fire{R}{}{\greenT}(t)$
           \State $\fireProg{P} \gets \fireProg{P} \listconcat r_{new}$      
       \EndIf
       \If {$\adornment = \redF/\greenT$}
           \State $r_{\greenT}  \gets \fire{R}{}{\redF/\greenT}(t,true) \dlImp \fire{R}{}{\greenT}(t)$
           \State $r_{\redF}  \gets \fire{R}{}{\redF/\greenT}(t,false) \dlImp \fire{R}{}{\redF}(t)$
           \State $\fireProg{P} \gets \fireProg{P} \listconcat r_{\greenT} \listconcat r_{\redF}$
       \EndIf
    \EndProcedure
  \end{algorithmic}

  \begin{algorithmic}[1]
    \Procedure{CreateIDBPosRule}{$\fireProg{P}$, $R(t)^{\adornment}$, $r$, $args$}
            \State $r_{pred}  \gets \fire{R}{}{\greenT}(t) \dlImp \fire{r}{R}{\greenT}(args)$
          \State $\fireProg{P} \gets \fireProg{P} \listconcat r_{pred}$      
    \EndProcedure
  \end{algorithmic}

  \begin{algorithmic}[1]
    \Procedure{CreateIDBFiringRule}{$\fireProg{P}$, $R(t)^{\adornment}$, $r$, $args$}
         \State $body_{new} \gets []$
         \ForAll {$g_i(\vec{X}) \in \bodyOf{r}$}
            \State $\adornment_{goal} \gets \greenT$
            \If {\Call{isNegated}{g}}
               \State $\adornment_{goal} \gets \redF$
            \EndIf
              \State $g_{new} \gets \fire{\predOf{g_i}}{}{\adornment_{goal}}(\vec{X})$
            \State $body_{new} \gets body_{new} \listconcat g_{new}$
            \If { $ g(\vec{X})^{\greenT} \not\in (done \cup todo) \wedge \adornment = \greenT$ }
               \State $todo \gets todo :: g(\vec{X})^{\adornment_{goal}}$
            \EndIf
         \EndFor
       \State $r_{new} \gets \fire{r}{R}{\greenT}(args) \dlImp body_{new}$
       \State $\fireProg{P} \gets \fireProg{P} \listconcat r_{new}$ 
         \If {$\adornment \neq \greenT$}
             \ForAll {$g_i \in \bodyOf{r}$}
             \If {\Call{isNegated}{$g_i$}}
             \State $args \gets args \listconcat \neg b_i$
             \Else
             \State $args \gets args \listconcat b_i$
             \EndIf
             \EndFor
         \State $body_{new} \gets []$
         \ForAll {$g_i(\vec{X}) \in \bodyOf{r}$}
             \State $g_{new} \gets \fire{\predOf{g_i}}{}{\redF/\greenT}(\vec{X}, b_i)$
            \State $body_{new} \gets body_{new} \listconcat g_{new}$
            \If { $ g(\vec{X})^{\redF/\greenT} \not\in (done \cup todo) $ }
               \State $todo \gets todo :: g(\vec{X})^{\adornment_{goal}}$
            \EndIf
         \EndFor
       \State $r_{new} \gets \fire{r}{R}{\adornment}(args) \dlImp body_{new}$
       \State $\fireProg{P} \gets \fireProg{P} \listconcat r_{new}$ 
         \EndIf
    \EndProcedure
  \end{algorithmic}
}
\caption{Create Firing Rules Subprocedures}\label{alg:firing-rules-sub}
\end{algorithm}
The algorithm that creates the firing rules for an annotated input query is shown as Algorithm\,\ref{alg:firing-rules} (the pseudocode for the subprocedures is given as Algorithm\,\ref{alg:firing-rules-sub}). 
It maintains a list of annotated atoms that need to be processed which is initialized with the $Q(t)$ ($\provQ$). 
For each such atom 
$R(t)^{\adornment}$ (here $\adornment$ is the annotation of the atom), 
it creates firing rules for each rule $r$ that has this atom 
as a head and a positive firing rule for $R(t)$. Furthermore, if the atom is annotated with $\redF/\greenT$ or $\redF$, then additional firing rules are added to capture missing tuples and failed rule derivations. 

\mypartitle{EDB atoms} 
For an EDB atom $R(t)^{\greenT}$, we use procedure \textsc{createEDBFiringRule} to create one rule $\fire{R}{}{\greenT}(t) \dlImp R(t)$ that returns tuples from relation $R$ that matches $t$. For missing tuples ($R(t)^{\redF}$), we extract all variables from $t$ (some arguments may be constants propagated during unification) and create a rule that returns all tuples that can be formed from values of the associated domains of the attributes these variables are bound to and do not exist in $R$. This is achieved by adding goals $\domA{(X_i)}$ 
as explained in Example\,\ref{ex:example5}.

\mypartitle{Rules} 
Consider a rule $r: R(t) \dlImp g_1(\vec{X_1}), \ldots, g_n(\vec{X_n})$.
If the head of $r$ 
is annotated with $\greenT$, then we create a rule with head $\fire{r}{R}{\greenT}(\vec{X})$ where $\vec{X} = \varsOf{r}$ 
and the same body as $r$ except that each goal is replaced with its firing version with appropriate annotation (e.g., $\greenT$ for positive goals). 
For rules annotated with $\redF$ or $\redF/\greenT$, 
we create one additional rule with head $\fire{r}{R}{\redF}(\vec{X},\vec{V})$ where $\vec{X}$ is defined as above, and $\vec{V}$ contains $V_i$ if the $i^{th}$ goal of $r$ is positive and $\dlNeg V_i$ otherwise. The body of this rule contains the $\redF/\greenT$ version of every goal in $r$'s body plus an additional goal $\fire{R}{}{\redF}$ to ensure that the head atom is failed. As an example for this type of rule, consider the third rule from the top in Fig.\,\ref{fig:exam-fire-negated}.
%
%


\mypartitle{IDB atoms} 
For each rule $r$ with head $\rel{R}(t)$, we create a rule $\fire{R}{}{\greenT}(t) \dlImp \fire{r}{R}{\greenT}(\vec{X})$ where $\vec{X}$ is the concatenation of $t$ with all existential variables from the body of $r$.
IDB atoms with $\redF$ or $\redF/\greenT$ annotations are handled in the same way as EDB atoms with these annotations. 
For each $R(t)^{\redF}$, we create a rule with $\dlNeg \fire{R}{}{\greenT}(t)$ in the body using the associated domain queries to restrict variable bindings. For $R(t)^{\redF/\greenT}$, we add two additional rules as shown in Fig.\,\ref{fig:exam-fire-negated} for EDB atoms.

\begin{Theorem}[Correctness of Firing Rules]\label{theo:alg-firingcorrect}
 Let $P$ be an input program, $r$ denote a rule of $P$ with $m$ goals, and $\fireProg{P}$ be the firing version of $P$. We use $r(t) \isSucc P(I)$ to denote that the rule derivation $r(t)$ is successful in the evaluation of program $P$ over $I$. The firing rules for $P$ correctly determine existence of tuples, successful rule derivations, and failed rule derivations for missing answers: 
  \begin{itemize}
  \item $\fire{R}{}{\greenT}(t) \in \fireProg{P}(I) \leftrightarrow R(t) \in P(I)$
  \item $\fire{R}{}{\redF}(t) \in \fireProg{P}(I) \leftrightarrow R(t) \not\in P(I)$
  \item $\fire{r}{}{\greenT}(t) \in \fireProg{P}(I) \leftrightarrow r(t) \isSucc P(I)$
  \item $\fire{r}{}{\redF}(t,\vec{V}) \in \fireProg{P}(I) \leftrightarrow r(t) \isFailed P(I) \wedge \headOf{r(t)} \not\in P(I)$ and for $i \in \{1,\ldots,m\}$ we have that $V_i$ is false iff $i^{th}$ goal fails in $r(t)$.
  \end{itemize}
\end{Theorem}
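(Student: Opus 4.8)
The plan is to prove all four biconditionals simultaneously by induction on the dependency depth $\depthP{R}$ of a predicate $R$ in $P$, which is well-defined precisely because $P$ is non-recursive: we order predicates so that every IDB predicate depends only on predicates of strictly smaller depth, with EDB predicates placed at depth $0$. The firing program $\fireProg{P}$ is itself non-recursive, indeed locally stratified (the only negative dependencies are of $\fire{R}{}{\redF}$ on $\dlNeg \fire{R}{}{\greenT}$, and of the goal-negation rules, none of which close a cycle), so every firing atom has a determinate truth value in $\fireProg{P}(I)$. I would then establish the four claims for all predicates at depth $k$ assuming they hold at all depths ${<}\,k$.

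For the base case (EDB predicates, depth $0$) the first two items follow directly from \textsc{CreateEDBFiringRule}: the rule $\fire{R}{}{\greenT}(t) \dlImp R(t)$ fires exactly when $R(t) \in I$, and since $P(I)$ contains all EDB atoms true in $I$ this is equivalent to $R(t) \in P(I)$; dually $\fire{R}{}{\redF}(t) \dlImp \domA(X_1), \ldots, \domA(X_n), \dlNeg R(t)$ fires exactly for domain-grounded tuples with $R(t) \notin P(I)$. The two $\redF/\greenT$ rules then attach the flag $\boolT$ (resp.\ $\boolF$) precisely to existing (resp.\ missing) tuples, so the recorded boolean correctly encodes membership in $P(I)$.

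For the inductive step I would treat the remaining items in dependency order. First, for successful rule derivations, the rule $\fire{r}{}{\greenT}(\varsOf{r})$ built by \textsc{CreateIDBFiringRule} replaces each positive goal $g_i$ by $\fire{\predOf{g_i}}{}{\greenT}$ and each negated goal by $\fire{\predOf{g_i}}{}{\redF}$, all of strictly smaller depth; by the induction hypothesis these firing atoms hold iff the corresponding (non-)membership in $P(I)$ holds, so the body is satisfied by binding $t$ iff every positive goal of $r(t)$ is true and every negated goal is false, i.e.\ iff $r(t) \isSucc P(I)$. Next, \textsc{CreateIDBPosRule} adds $\fire{R}{}{\greenT}(t) \dlImp \fire{r}{}{\greenT}(\cdots)$ for every rule $r$ with head $R$, so $\fire{R}{}{\greenT}(t)$ fires iff some successful derivation has head $R(t)$, which by the definition of $P(I)$ is exactly $R(t) \in P(I)$. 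Finally, \textsc{CreateIDBNegRule} adds $\fire{R}{}{\redF}(t) \dlImp \domA(X_1), \ldots, \dlNeg \fire{R}{}{\greenT}(t)$, which by the previous item fires for domain-grounded $t$ iff $R(t) \notin P(I)$, and its two $\redF/\greenT$ rules again attach the correct flag.

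The main obstacle is the fourth item, the failed-derivation rule $\fire{r}{}{\redF}(\varsOf{r}, \vec{V})$, which I would argue in both directions from the body produced by \textsc{CreateIDBFiringRule}. The goal $\fire{R}{}{\redF}(t)$ guarantees by the absence claim that $\headOf{r(t)} \notin P(I)$, and since any successful derivation would force its head into $P(I)$, this already forces $r(t) \isFailed P(I)$; conversely a failed derivation with missing head satisfies exactly this goal together with the $\redF/\greenT$ goal versions $\fire{\predOf{g_i}}{}{\redF/\greenT}(\vec{X_i}, V_i)$, which by the induction hypothesis bind $V_i$ to $\boolT$ iff the underlying tuple exists. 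The delicate point is the sign bookkeeping for the flags: for a negated goal the head records $\dlNeg V_i$, so that the value stored in the $i$-th head position is $\boolF$ exactly when the literal $g_i$ fails in $r(t)$ --- a positive goal fails iff its tuple is missing ($V_i = \boolF$), while a negated goal fails iff its tuple exists ($V_i = \boolT$, hence $\dlNeg V_i = \boolF$). Verifying that this convention makes ``$V_i$ false iff the $i$-th goal fails'' hold uniformly, and that the associated-domain goals restrict the $\vec{V}$-bindings to exactly the domain-grounded failed derivations (so that no spurious failed derivation is captured and none is lost), is where the argument demands the most care.
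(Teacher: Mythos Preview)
Your proposal is correct and follows essentially the same approach as the paper: an induction on the dependency depth of predicates (the paper phrases its base case as ``programs of depth~1'' rather than ``predicates of depth~0'', but the structure is identical). Your treatment is in fact more thorough than the paper's on two points---the observation that $\fireProg{P}$ is locally stratified so the firing atoms have determinate truth values, and the sign bookkeeping for the $\vec{V}$ flags on negated goals in the fourth item---both of which the paper's proof leaves implicit.
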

\begin{proof}
  We prove Theorem~\ref{theo:alg-firingcorrect} by induction over the structure of a program. For the base case, we consider programs of ``depth'' $1$, i.e., only EDB predicates are used in rule bodies. 
Then, we prove correctness for programs of depth $n+1$ based on the correctness of programs of depth $n$.
  We define the depth $\aDepth$ of predicates, rules, and programs as follows: 1) for all EDB predicates $R$, we define $\depthP{R} = 0$; 2) for an IDB predicate $R$, we define $\depthP{R} = \max_{\headOf{r} = R}\depthP{r}$, i.e., the maximal depth among all rules $r$ with $\headOf{r} = R$; 3) the depth of a rule $r$ is $\depthP{r} = \max_{R \in \bodyOf{r}} \depthP{R} + 1$, i.e., the maximal depth of all predicates in its body plus one; 4) the depth of a program $P$ is the maximum depth of its rules: $\depthP{P} = \max_{r \in P} \depthP{r}$.

\mypara{1) Base Case}
Assume that we have a program $P$ with depth $1$, 
e.g., $r: \rel{Q}(\vec{X}) \dlImp \rel{R}(\vec{X_1}), \ldots, \rel{R}(\vec{X_n})$. 
We first prove that the positive and negative versions of firing rules for EDB atoms are correct, 
because only these rules are used for the rules of depth $1$ programs. 
A positive version of EDB firing rule $\fire{R}{}{\greenT}$ 
creates a copy of the input relation \rel{R} and, thus, a tuple $t \in R$ iff $t \in \fire{R}{}{\greenT}$. 
For the negative version $\fire{R}{}{\redF}$, all variables are bound to associated domains $\domA$ 
and it is explicitly checked that $\dlNeg R(\vec{X})$ is true. 
Finally, $\fire{R}{}{\redF/\greenT}$ uses $\fire{R}{}{\greenT}$ and $\fire{R}{}{\redF}$ to determine whether the tuple exists in \rel{R}. 
Since these rules are correct, it follows that $\fire{R}{}{\redF/\greenT}$ is correct. 
The positive firing rule for the rule $r$ (\fire{r}{}{\greenT}) is correct since its body only contains positive and negative EDB firing rules ($\fire{R}{}{\greenT}$ respective $\fire{R}{}{\redF}$) which are already known to be correct.
The correctness of the positive firing version of  a rule's head predicate ($\fire{Q}{}{\greenT}$) follows naturally from the correctness of $\fire{r}{}{\greenT}$.
The negative version of the rule $\fire{r}{}{\redF}(\vec{X},\vec{V})$ contains an additional goal (i.e., $\dlNeg \rel{Q}(\vec{X})$) 
and uses the firing version $\fire{R}{}{\redF/\greenT}$ to return only bindings for failed derivations. 
Since $\fire{R}{}{\redF/\greenT}$ has been proven to be correct, we only need to prove that the negative firing version of the head predicate of $r$ is correct. 
For a head predicate 
with annotation $\redF$, we create two firing rules ($\fire{Q}{r}{\greenT}$ and $\fire{Q}{r}{\redF}$).
The rule $\fire{Q}{}{\greenT}$ was already proven to be correct.
$\fire{Q}{}{\redF}$ is also correct, because it contains only $\fire{Q}{}{\greenT}$ and domain queries in the body which were already proven to be correct. 

\mypara{2) Inductive Step}
It remains to be shown 
that firing rules for programs of depth $n+1$ are correct. 
Assume that firing rules for programs of depth up to $n$ are correct. 
Let $r$ be a firing rule of depth $n+1$ in a program of depth $n+1$.
It follows that $\max_{R \in \bodyOf{r}} \depthP{R} \leq n$,
otherwise $r$ would be of a depth larger than $n+1$. Based on the induction hypothesis, it is guaranteed that the firing rules for all these predicates are correct. 
Using the same argument as in the base case, it follows that the firing rule for $r$ is correct.
\end{proof}

\begin{figure}[t]
  $\,$\\[-7mm]
  \centering
  \begin{align*}
%
\fire{Q}{}{\greenT}(n,s) &\dlImp \fire{r_1}{Q}{\greenT}(n,s,Z)\\
\fire{r_1}{Q}{\greenT}(n,s,Z) &\dlImp \fire{T}{}{\greenT}(n,Z), \fire{T}{}{\greenT}(Z,s), \fire{T}{}{\redF}(n,s)\\
\fireC{r_2}{}{\greenT}{r_1^1}(n,Z) &\dlImp \rel{T}(n,Z), \fire{r_1}{Q}{\greenT}(n,s,Z)\\
\fireC{r_2}{}{\greenT}{r_1^2}(Z,s) &\dlImp \rel{T}(Z,s), \fire{r_1}{Q}{\greenT}(n,s,Z)\\
\fire{T}{}{\redF}(n,s) &\dlImp \dlNeg \rel{T}(n,s)
  \end{align*}\\[-4mm]
  \caption{Example firing rules with connectivity checks}
  \label{fig:exam-connect-joins}
\end{figure}
\subsection{Connectivity Joins}
\label{sec:connectivity-joins}

To be in the result of firing rules is a necessary,
but not sufficient, condition for the corresponding rule node to be
connected to a $\provQ$ node 
in the explanation.
Thus, to guarantee that only nodes connected to the $\provQ$ 
node(s) are returned, 
we have to check whether they are actually connected. 

\begin{Example} \label{ex:example6}
Consider the firing rules for 
$\whyq \rel{Q}(n,s)$ shown in Fig.\,\ref{fig:exam-fire-rules}.
The corresponding rules with connectivity checks are shown in Fig.\,\ref{fig:exam-connect-joins}.
All the rule nodes corresponding to 
$\fire{r_1}{Q}{\greenT}(n,s,Z)$ are guaranteed to be connected to the
PQ node $\rel{Q}(n,s)$. 
For sake of the example, assume that instead of using $\rel{T}$, rule $r_1$ uses  an IDB relation $R$ which is computed using another rule $r_2: \rel{R}(X,Y) \dlImp \rel{T}(X,Y)$.
Consider the firing rule $\fire{r_2}{T}{\greenT}(n,Z) \dlImp \rel{T}(n,Z)$ created based on the second goal of $r_1$.
Some provenance graph fragments computed by this rule may not be connected to $\rel{Q}(n,s)$.
A tuple node $\rel{R}(n,c)$ for a constant $c$ is only connected to the 
node $\rel{Q}(n,s)$ iff it is
part of a successful binding of $r_1$.
That is, for the node $\rel{R}(n,c)$ to be connected, there has to exist another tuple $(c,s)$ in $\rel{R}$.  
We check connectivity to $\rel{Q}(n,s)$ 
one hop at a time.
This is achieved by adding the head of the firing rule for $r_1$ to the body of the firing rule for $r_2$ as shown in Fig.\,\ref{fig:exam-connect-joins} 
(the second and third rule from the bottom). 
We use $\fireC{r_2}{}{\greenT}{r_1^k}(\vec{X})$ to denote the firing rule for $r_2$
connected to the $k^{th}$ goal of rule $r_1$.
Note that, this connectivity check is
unnecessary for rules with only constants (the last rule in Fig.\,\ref{fig:exam-connect-joins}).
%
\end{Example}

\SL{The algo below is potential removal}
\begin{algorithm}[t]
{\small
  \begin{algorithmic}[1]
    \Procedure{AddConnectivityRules}{$\fireProg{P}$, $Q(t)$}
    \State $\fireCProg{P} \gets []$
    \State $paths \gets \Call{pathStartingIn}{\fireProg{P}, Q(t)}$
    \ForAll {$p \in paths$}
       \State $p \gets \Call{filterRuleNodes}{p}$
       \ForAll {$e =(r_i(\vec{X_1})^{\adornment_1},r_j(\vec{X_2})^{\adornment_2}) \in p$}
           \State $goals \gets \Call{getMatchingGoals}{e}$
           \ForAll {$g_k \in goals$}
             \State $g_{new} \gets \Call{unifyHead}{\fire{r_i}{R_1}{\adornment_1}(t_1), g_k, \fire{r_j}{R_2}{\adornment_2}(t_2)}$
             \State $r_{new} \gets \fireC{r_j}{}{\adornment_2}{r_i^k}(t_2) \dlImp \bodyOf{\fire{r_j}{R_2}{\adornment_2}(t_2)}, g_{new}$
             \State $\fireCProg{P} \gets \fireCProg{P} \listconcat r_{new}$
           \EndFor
       \EndFor
    \EndFor
    \State \Return $\fireCProg{P}$
\EndProcedure
  \end{algorithmic}
}
  \caption{Add Connectivity Joins}\label{alg:connectivity-joins}
\end{algorithm}

Algorithm~\ref{alg:connectivity-joins}
 traverses the query's rules starting from the $\provQ$ to find all combinations of rules $r_i$ and $r_j$ such that the head of $r_j$ can be unified with a goal in the body of $r_i$.
We use the subprocedure $\textsc{filterRuleNodes}$ to prune rules containing only constants. 
 For each such pair $(r_i,r_j)$ where the head of $r_j$ corresponds to the $k^{th}$ goal in the body of $r_i$, we create a rule 
 $\fireC{r_j}{}{\greenT}{r_i^k}(\vec{X})$ as follows. We unify the variables of 
the $k^{th}$
goal in the firing rule for $r_i$ with the head variables of the firing rule for $r_j$. All remaining variables of $r_i$ are renamed to avoid name clashes. We then add the unified head of $r_i$ to the body of $r_j$. 
Effectively, these rules check one hop at a time whether rule nodes in the provenance graph are connected to 
the nodes matching the $\provQ$. 

\subsection{Computing the Edge Relation}
\label{sec:comp-edge-relat}

The program created so far captures all the information needed to generate the edge
relation of the graph for the $\provQ$. 
To compute the edge relation, we use Skolem functions to create node 
identifiers.
The identifier of a node captures the type of the node (tuple, 
rule, or 
goal), assignments from variables to constants, and the success/failure status of the
node, 
e.g., a tuple node $\rel{T}(n,s)$ that is successful would be represented as
$\nodeSk{rel}{T}{\greenT}(n,s)$. 
Each rule firing corresponds to a fragment of $\provGraph(P,I)$. 
For example, one such fragment is 
shown 
in Fig.\,\ref{fig:examp-graph-edge} (left).  
Such a substructure is created through a set of rules:
\begin{itemize}
\item One rule creating edges between tuple nodes for the head predicate and 
      rule nodes
\item One rule for each goal connecting a rule node to 
      that goal node (for
      failed rules only the failed goals are connected)
\item One rule creating edges between each goal node and the corresponding EDB tuple node
\end{itemize}
The pseudocode for creating rules that return the edge relation is provided as Algorithm\,\ref{alg:create-move-relations}. 




\begin{figure}[t]
  \begin{minipage}{1.1\linewidth} 
    $\,$\\[-5mm]
    \begin{minipage}{0.4\linewidth}
\resizebox{0.87\textwidth}{!}{\begin{tikzpicture}[>=latex',line join=bevel,line width=0.3mm]
  \definecolor{fillcolor}{rgb}{0.8,1.0,0.8};
  \node (REL_Q_WON_a_c_) at (150bp,284bp) [draw=black,fill=fillcolor,ellipse] {$Q(n,s)$};
  \definecolor{fillcolor}{rgb}{0.8,1.0,0.8};
  \node (RULE_0_LOST_a_c_b_) at (150bp,253bp) [draw=black,fill=fillcolor,rectangle] {$r_1(n,s,Z)$};

  \definecolor{fillcolor}{rgb}{0.8,1.0,0.8};
  \node (GOAL_0_0_WON_a_b_) at (100bp,223bp) [draw=black,fill=fillcolor,rounded corners=.15cm,inner sep=3pt] {$g_{1}^{1}(n,Z)$};
  \definecolor{fillcolor}{rgb}{0.8,1.0,0.8};
  \node (EDB_T_LOST_a_b_) at (93bp,193bp) [draw=black,fill=fillcolor,ellipse] {$T(n,Z)$};

  \definecolor{fillcolor}{rgb}{0.8,1.0,0.8};
  \node (GOAL_0_1_WON_b_c_) at (150bp,223bp) [draw=black,fill=fillcolor,rounded corners=.15cm,inner sep=3pt] {$g_{1}^{2}(Z,s)$};
  \definecolor{fillcolor}{rgb}{0.8,1.0,0.8};
  \node (EDB_T_LOST_b_c_) at (150bp,193bp) [draw=black,fill=fillcolor,ellipse] {$T(Z,s)$};

  \definecolor{fillcolor}{rgb}{0.8,1.0,0.8};
  \node (GOAL_0_2_WON_a_c_) at (200bp,223bp) [draw=black,fill=fillcolor,rounded corners=.15cm,inner sep=3pt] {$g_{1}^{3}(n,s)$};
  \definecolor{fillcolor}{rgb}{1.0,0.51,0.51};
  \node (EDB_T_LOST_a_c_) at (205bp,193bp) [draw=black,fill=fillcolor,ellipse] {$T(n,s)$};

  \draw [->] (RULE_0_LOST_a_c_b_) -> (GOAL_0_2_WON_a_c_);
  \draw [->] (GOAL_0_0_WON_a_b_) -> (EDB_T_LOST_a_b_);

  \draw [->] (RULE_0_LOST_a_c_b_) -> (GOAL_0_0_WON_a_b_);
 
  \draw [->] (GOAL_0_2_WON_a_c_) -> (EDB_T_LOST_a_c_);

  \draw[->] (GOAL_0_1_WON_b_c_) -> (EDB_T_LOST_b_c_);

  \draw [->] (REL_Q_WON_a_c_) -> (RULE_0_LOST_a_c_b_);
  \draw [->] (RULE_0_LOST_a_c_b_) -> (GOAL_0_1_WON_b_c_);
%
\end{tikzpicture}
  \end{minipage} \hspace{-10mm}
  \begin{minipage}{0.38\linewidth}
   \small \vspace{3mm} 
    \begin{align*}
      \rel{edge}(\nodeSk{rel}{Q}{\greenT}(n,s), \nodeSk{rule}{r_1}{\greenT}(n,s,Z)) &\dlImp \fire{r_1}{only2hop}{\greenT}(n,s,Z)\\ 
      \rel{edge}(\nodeSk{rule}{r_1}{\greenT}(n,s,Z), \nodeSk{rel}{g_1^1}{\greenT}(n,Z)) &\dlImp \fire{r_1}{only2hop}{\greenT}(n,s,Z)\\ 
      \rel{edge}(\nodeSk{rel}{g_1^1}{\greenT}(n,Z), \nodeSk{rel}{T}{\greenT}(n,Z)) &\dlImp \fire{r_1}{only2hop}{\greenT}(n,s,Z)\\ 
      \rel{edge}(\nodeSk{rel}{g_1^3}{\greenT}(n,s), \nodeSk{rel}{T}{\redF}(n,s)) &\dlImp \fire{r_1}{only2hop}{\greenT}(n,s,Z)\\ 
    \end{align*}
  \end{minipage}
 \end{minipage}
 $\,$\\[-8mm]
 \caption{Example structure/rules for edge relation of explanation}
 \label{fig:examp-graph-edge}
\end{figure}
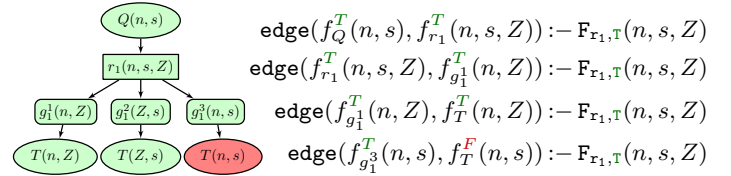

\begin{Example}
  Consider the firing rules with connectivity joins from Example\,\ref{ex:example6}. 
  Some of the rules for creating the edge relation for the explanation sought by
  the user are shown in Fig.\,\ref{fig:examp-graph-edge} (on the right side).
  For example, each edge connecting the tuple node $\rel{Q}(n,s)$ 
  to a successful rule node $r_1(n,s,Z)$ is created by the
  top-most rule, 
  the second rule creates an edge 
  between 
  $r_1(n,s,Z)$ and 
  $g_1^1(n,Z)$, and so on.
\end{Example}

\SLDel{ 
As mentioned in Sec.~\ref{sec:contribution}, by choosing a different set of rules to derive the edge relation, we can derive different types of provenance graphs including provenance games and graphs corresponding to provenance polynomials for positive queries. 
%
For instance, consider one particular type of edges that occurs only in provenance games by not our provenance graphs. A provenance game contain two types of relation nodes: positive and negated. In a provenance game, positive goal nodes are connected to negated tuple nodes which are in turn connected to positive tuple nodes. Furthermore, goal nodes in provenance games record the arguments of the goal. In our example, we could compute edges from  goal $g_1^1$ to a negated tuple node $\neg T$ using a rule $\rel{edge}(\nodeSk{rel}{g_1^1}{\greenT}(n,Z), \nodeSk{rel}{\neg T}{\redF}(n,Z)) \dlImp \fire{r_1}{only2hop}{\greenT}(n,s,Z)$. Note that additional rules are necessary to compute provenance games. For a full set of rules see our technical report~\cite{SY15}.
}

\begin{algorithm}[H]
{\small
  \begin{algorithmic}[1]
    \Procedure{CreateEdgeRelation}{$\fireCProg{P}$, $Q(t)$}
      \State $\moveProg{P} \gets []$
      \State $todo \gets [Q(t)]$
      \State $done \gets \{\}$
      \While {$todo \neq []$}
        \State $R(t)^{\adornment} \gets \Call{pop}{todo}$
        \If {$R(t)^{\adornment} \in done$}
          \State continue
        \EndIf
        \State $done \gets \Call{insert}{done, R(t)^{\adornment}}$
        \State $rules \gets \Call{getRules}{R(t)^{\adornment}}$
        \ForAll {$r \in rules$}
          \If {$\Call{isEDB}{R}$}
            \If {$\adornment = \greenT$}
              \If {$\Call{isNegated}{g}$}
              \State $r_{g \to R} \gets \rel{edge}(\nodeSk{rel}{g}{\greenT}(t'), \nodeSk{rel}{R}{\redF}(t')) \dlImp \fire{r}{R}{\greenT}(args)$
              \Else
              \State $r_{g \to R} \gets \rel{edge}(\nodeSk{rel}{g}{\greenT}(t'), \nodeSk{rel}{R}{\greenT}(t')) \dlImp \fire{r}{R}{\greenT}(args)$
              \EndIf
           \Else
              \State $args_{b} \gets b_1, \ldots, b_{i-1}, false, b_{i+1}, \ldots, b_n$
              \If {$\Call{isNegated}{g}$}
              \State $r_{g \to R} \gets \rel{edge}(\nodeSk{rel}{g}{\redF}(t'), \nodeSk{rel}{R}{\greenT}(t')) \dlImp \fire{r}{R}{\redF}(args, args_{b})$
              \Else
              \State $r_{g \to R} \gets \rel{edge}(\nodeSk{rel}{g}{\redF}(t'), \nodeSk{rel}{R}{\redF}(t')) \dlImp \fire{r}{R}{\redF}(args, args_{b})$
              \EndIf
	   \EndIf
           \State $\moveProg{P} \gets \moveProg{P} \listconcat r_{g \to R}$
          \Else
            \State $\adornment_{r} = \Call{switchState}{\adornment}$
            \State $r_{new} \gets \rel{edge}(\nodeSk{rel}{\predOf{r}}{\adornment}(t), \nodeSk{rel}{r}{\adornment_{r}}(t,\ldots)) \dlImp \fire{r}{R}{\adornment}(t,\ldots)$
            \State $\moveProg{P} \gets \moveProg{P} \listconcat r_{new}$
            \ForAll {$g(t') \in \bodyOf{r}$}
              \If {$\Call{isNegated}{g}$}
              \State $\adornment' \gets \Call{switchState}{\adornment}$
              \Else
              \State $\adornment' \gets \adornment$
              \EndIf
              \State $todo \gets todo \listconcat g(t')^{\adornment'}$
              \If {$\adornment = \greenT$}
              \State $r_{r \to g} \gets \rel{edge}(\nodeSk{rel}{r}{\greenT}(args), \nodeSk{rel}{g}{\greenT}(t')) \dlImp \fire{r}{R}{\greenT}(args)$
              \Else
              \State $args_{b} \gets b_1, \ldots, b_{i-1}, false, b_{i+1}, \ldots, b_n$
              \State $r_{r \to g} \gets \rel{edge}(\nodeSk{rel}{r}{\redF}(args), \nodeSk{rel}{g}{\redF}(t')) \dlImp \fire{r}{R}{\redF}(args, args_{b})$
              \EndIf
              \State $\moveProg{P} \gets \moveProg{P} \listconcat r_{r \to g}$ 
            \EndFor
          \EndIf
        \EndFor
      \EndWhile
      \State \Return $\moveProg{P}$
    \EndProcedure
  \end{algorithmic}
}
  \caption{Create Edge Relation}\label{alg:create-move-relations}
\end{algorithm}

\subsection{Correctness} 
\label{sec:corr-proof-compl}

 We now state correctness of our approach for computing an explanation $\explainq(P,Q(t),I)$ for a provenance question. 

\begin{Theorem}[Correctness]\label{theo:alg-correctness}
The result of a Datalog program $\GPProg(P,Q(t),I)$ that rewrites a query (input program) $P$ over instance $I$ is the edge relation of $\explainq(P,Q(t),I)$. 
\end{Theorem}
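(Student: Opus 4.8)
The plan is to prove both inclusions: every edge produced by $\GPProg(P,Q(t),I)$ is an edge of $\explainq(P,Q(t),I)$ (soundness), and every edge of the explanation is produced (completeness). The whole argument rests on Theorem~\ref{theo:alg-firingcorrect}, which already certifies that the firing rules capture exactly the successful and (relevant) failed rule derivations together with the per-goal failure flags $V_i$. Given that, the remaining work is to show that (i) unification and annotation never discard a derivation lying on the explanation, (ii) the connectivity joins retain exactly the derivations whose corresponding $\provGraph(P,I)$ fragment is connected to a $\provQ$ node, and (iii) the edge rules of Algorithm~\ref{alg:create-move-relations} emit precisely the edges prescribed by Definition~\ref{def:prov=graph} for each retained fragment.

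First I would dispatch unification and annotation. By the remark in Section~\ref{sec:unify-program-with}, binding a head variable to a constant cannot change whether a derivation agreeing with that binding fires; hence \textsc{UnifyProgram} only removes derivations whose head cannot match $Q(t)$, and by the inductive reachability argument sketched at the end of Section~\ref{sec:problem-definition} such derivations are never connected to a $\provQ$ node. Dually, \textsc{AnnotProgram} only restricts which success/failure statuses are propagated. A top-down induction on the program depth $\depthP{P}$, mirroring the structure of the proof of Theorem~\ref{theo:alg-firingcorrect} and inverting the status across negated goals, shows the retained statuses are exactly those that can occur along a connection to some $Q(t')$ with $t' \matches t$: for a $\whyq$ question only $\greenT$ reaches the head, for $\whynotq$ only $\redF$, and the $\redF/\greenT$ marking is introduced precisely for atoms reachable with both polarities.

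The crux of the proof, and the step I expect to be hardest, is the connectivity argument. By Theorem~\ref{theo:alg-firingcorrect} firing is necessary but not sufficient for a fragment to belong to the explanation, since the explanation is defined (Definition~\ref{def:explanation}) as the connected subgraph rooted at the $\provQ$ nodes. I would prove by induction on the number $k$ of rule-nesting levels separating a rule $r_j$ from the root $Q(t)$ that a rule node $r_j(\vec{c})$ at level $k$ is derived by its connectivity-annotated firing rule $\fireC{r_j}{}{\adornment}{r_i^{\ell}}$ iff its $\provGraph(P,I)$ fragment is connected to a $\provQ$ node. The base case ($k=0$) is the rule nodes deriving $Q$ itself, whose firing rules fire unconditionally and are directly connected to the $\provQ$ tuple node. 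For the step, the construction in Section~\ref{sec:connectivity-joins} adds the already-connectivity-checked firing head of the parent $r_i$ to the body of the firing rule for $r_j$ after unifying the variables of the $\ell^{th}$ goal of $r_i$ with the head of $r_j$; by the induction hypothesis this parent head is nonempty exactly for connected parents, so the join yields $r_j(\vec{c})$ iff it is reachable from a connected parent through the shared IDB tuple node that joins the two fragments. Care is needed to check that this variable unification exactly encodes that shared tuple node, and that negated goals (where the annotation flips) are handled by the polarity bookkeeping established above.

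Finally, the edge step is essentially mechanical. For each retained firing fact, Algorithm~\ref{alg:create-move-relations} emits the head-to-rule edge, the rule-to-goal edges, and the goal-to-tuple edges dictated by Definition~\ref{def:prov=graph}, using the Skolem terms $\nodeSk{rel}{R}{\adornment}$ as node identifiers. I would verify that for successful derivations all $m$ goals are connected, whereas for failed derivations the guard that sets the flag $b_i$ to false restricts the rule-to-goal and goal-to-tuple edges to the failed goals only, matching the conditional clause of the goal-node case of Definition~\ref{def:prov=graph}. Because the Skolem encoding makes the labelling $\nodeLabel$ injective, nodes with identical labels collapse, so shared subexpressions (e.g.\ the shared $g_1^3(n,s)$ in Fig.~\ref{fig:exam-pg-why-NY-seattle}) are produced exactly once, matching the label-uniqueness requirement of Definition~\ref{def:prov=graph}. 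Combining the three parts yields equality of the two edge relations.
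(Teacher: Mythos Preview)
Your proposal is correct and follows essentially the same strategy as the paper: both arguments split into ``all produced edges lie in $\provGraph(P,I)$'' and ``exactly the edges of $\explainq(P,Q(t),I)$ are produced,'' lean on Theorem~\ref{theo:alg-firingcorrect} for the firing-rule step, and establish connectivity by an induction that mirrors the nesting structure of the program.

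The differences are organizational rather than substantive. The paper's proof is quite terse: its first part argues that node arguments come from the associated domains and that edges respect the program structure, and its second part jumps straight to the connectivity induction, using the same depth measure $\depthP{r}$ introduced in the proof of Theorem~\ref{theo:alg-firingcorrect} (bottom-up from EDB predicates). You instead induct top-down on the number of rule-nesting levels from the $\provQ$ root, which is arguably the more natural direction for a reachability property and makes the role of the connectivity join (the unified parent head in the child's body) explicit at each step. You also treat the annotation phase and the per-goal boolean guards of Algorithm~\ref{alg:create-move-relations} separately, whereas the paper folds these into the surrounding prose. Both routes arrive at the same conclusion; yours is simply a more finely decomposed version of the paper's sketch.
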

\begin{proof}
For Theorem\,\ref{theo:alg-correctness}, 
we prove that 1) only edges from $\provGraph(P,I)$ are returned by the program $\GPProg(P,Q(t),I)$ and 2) the program returns precisely the set of edges of 
 explanation $\explainq(P,Q(t),I)$. 

\mypara{1}
The constant values used as variable binding by the rules creating edges in $\GPProg(P,Q(t),I)$ are either constants that occur in the $\provQ$ or the result of rules which are evaluated over the instance $I$. 
Since only the rules for creating the edge relation create new values (through Skolem functions), it follows that any constant used in constructing a node argument exists in the associated domain. 
Recall that the $\provGraph(P,I)$ 
only  contains nodes with arguments from 
the associated domain. 
Any edge returned by $\GPProg(P,Q(t),I)$ 
is strictly based on the structure of the input program and connects nodes that agree on variable bindings. 
Thus, 
each edge produced by $\GPProg(P,Q(t),I)$ 
will be contained in $\provGraph(P,I)$. 

\mypara{2}
We now prove that the program $\GPProg(P,Q(t),I)$ returns precisely the set of edges of $\explainq(P,Q(t),I)$.
Assume that the $Q(t)$ ($\provQ$) only uses constants (the extension to $\provQ$ which contains variables is immediate).
Consider a rule of an input program of depth $1$ (i.e., only EDB predicates in the body of rules). 
For such a rule node to be connected to the node $Q(t)$, its head variables have to be bound to $t$ (guaranteed by the unification step in Sec.~\ref{sec:unify-program-with}). Since the firing rules are known to be correct, this guarantees that exactly the rule nodes connected to the $\provQ$ node are generated. The propagation of this unification to the firing rules for EDB predicates is correct, because only EDB nodes agreeing with this binding can be connected to such a rule node. However, propagating constants is not sufficient since the firing rule for an EDB predicate (e.g., $R$) may return irrelevant tuples, i.e., tuples that are not part of any rule derivations for $Q(t)$ (e.g., there may not exist EDB tuples 
for other goals in the rule which share variables with the particular goal using predicate $R$). 
This is checked by 
the connectivity joins (Sec.~\ref{sec:connectivity-joins}). If a tuple is returned by a connected firing rule, then the corresponding node is guaranteed to be connected to at least one rule node deriving $\provQ$.
Note that this argument 
does not rely on the fact that predicates in the body of a rule are EDB predicates.  
Thus, we can apply this argument in a proof by induction to show that, given that rules of depth up to $n$ only produce connected rule derivations, the same holds for rules of depth $n+1$.
\end{proof}

\section{Implementation}
\label{sec:transl-into-relat}

\begin{figure}[t]
  $\,$\\[-3mm]
  \centering
  \includegraphics[width=0.78\linewidth]{./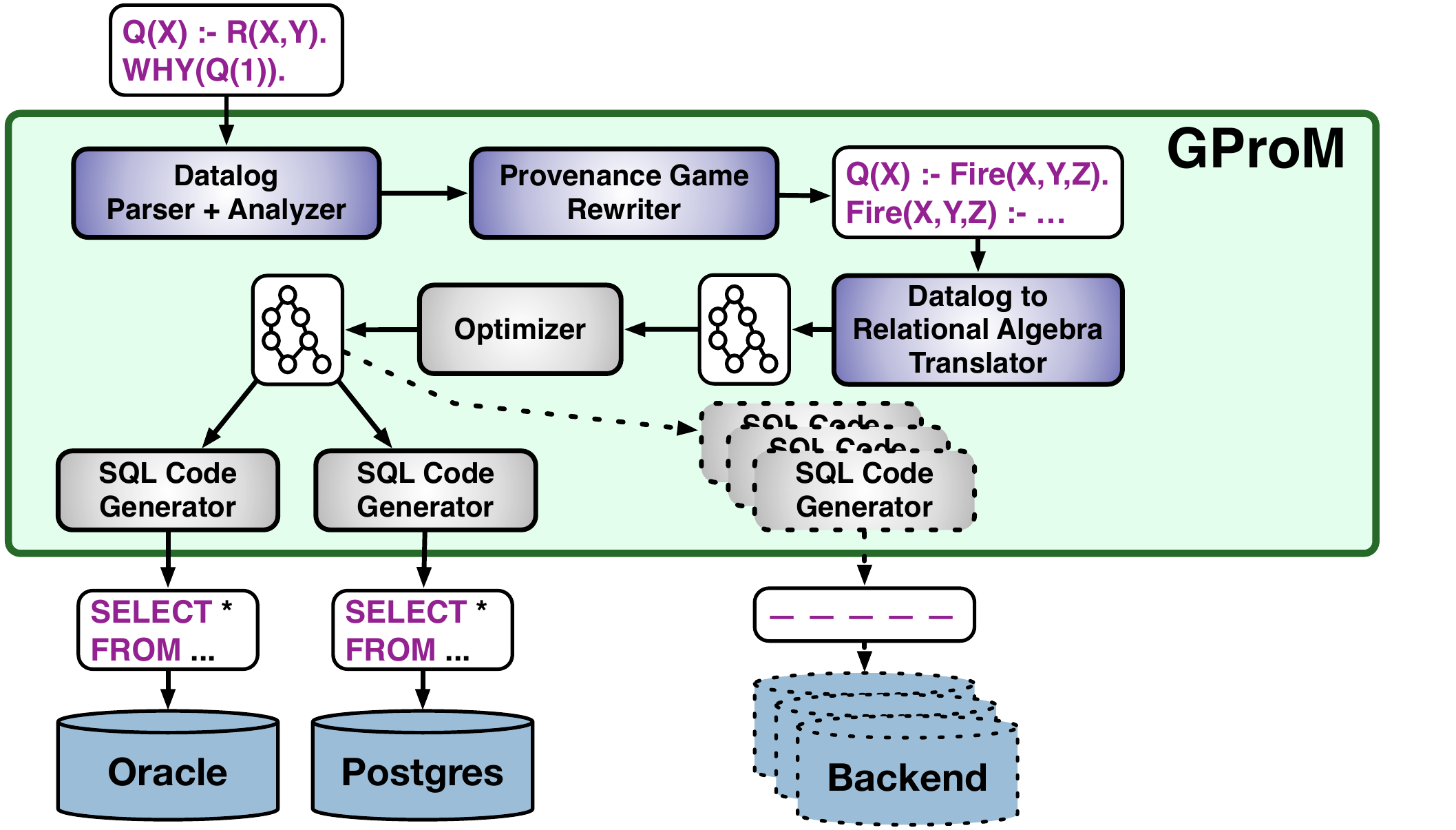}
  $\,$\\[-1mm]
  \caption{Implementation in GProM}
  \label{fig:gprom-gp}
\end{figure}

We have implemented the approach presented in Sec.~\ref{sec:compute-gp} in our provenance middleware called GProM~\cite{AG14} that executes provenance requests using a relational database backend (shown in Fig.\,\ref{fig:gprom-gp}). 
The system was originally developed to support provenance requests for SQL.
We have extended the system to support Datalog enriched with syntax for stating provenance questions. 
The user 
provides a why or why-not question and the corresponding Datalog query as input. 
Our system 
 parses  and semantically analyzes this input. Schema
information is gathered by querying the catalog of the backend database (e.g.,
to determine whether an EDB predicate with the expected arity exists). Modules
for accessing schema information are already part of the GProM system, but a new
semantic analysis component had to be developed to support Datalog. 
The algorithms presented in
Sec.~\ref{sec:compute-gp} are applied to create the program 
$\GPProg(P,Q(t),I)$
which computes 
$\explainq(P,Q(t),I)$.
This program is translated into a relational algebra ($\cal RA$) graph (GProM uses algebra
graphs instead of trees to allow for sharing of common subexpressions). The
algebra graph is then translated into SQL and sent 
to the backend database to compute the edge relation of 
 the explanation for the $\provQ$. Based on this edge relation, we then render a provenance graph 
(e.g., the graphs shown in 
Example\,\ref{ex:example1} are actual results produced by the system).\footnote{More examples for our 
  method and installation guideline for GProM are available at \url{https://github.com/IITDBGroup/gprom/wiki/datalog_prov}.}
While it would certainly be possible to directly translate the
Datalog program into SQL without the intermediate translation into 
$\cal RA$, we choose to introduce this step to be able to leverage the existing
heuristic and cost-based optimizations for $\cal RA$ 
graphs built into
GProM and use its library of 
$\cal RA$ to SQL translators. 

%
%
%
%
%
%
Our translation of first-order (FO) 
queries (a program with a distinguished answer relation) to $\cal RA$ 
is mostly standard. We first translate each rule into an algebra expression independently. 
Afterwards, we create expressions for IDB predicates as a union of the expressions for all rules with the predicates 
in the head. Finally, the algebra expressions for individual IDB predicates are connected into 
a single graph by replacing references to IDB predicates with their algebraic translation.  
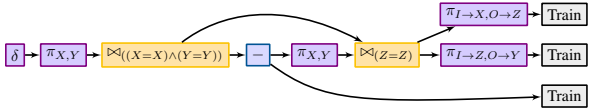
\begin{figure}[t]
  \centering
  $\,$\\[-3mm]
  \resizebox{0.9\linewidth}{!}{
   \begin{tikzpicture}[>=latex',line join=bevel,scale=0.5,line width=1pt]
  \pgfsetstrokecolor{black}
  \definecolor{fillcol}{rgb}{0,0,0};
  \pgfsetfillcolor{fillcol}
  \definecolor{strokecolor}{rgb}{0.5,0.0,0.5};
  \definecolor{fillcolor}{rgb}{0.84,0.8,1.0};
  \node (opNode0) at (20bp,158.0bp) [draw=strokecolor,fill=fillcolor,rectangle] {$\delta$};
  \definecolor{strokecolor}{rgb}{0.5,0.0,0.5};
  \definecolor{fillcolor}{rgb}{0.84,0.8,1.0};
  \node (opNode1) at (80bp,158.0bp) [draw=strokecolor,fill=fillcolor,rectangle] {$\pi_{X,Y}$};
  \definecolor{strokecolor}{rgb}{0.99,0.76,0.0};
  \definecolor{fillcolor}{rgb}{1.0,0.89,0.66};
  \node (opNode2) at (206.5bp,158.0bp) [draw=strokecolor,fill=fillcolor,rectangle] {$\bowtie_{((X = X)\wedge(Y = Y))}$};
  \definecolor{strokecolor}{rgb}{0.99,0.76,0.0};
  \definecolor{fillcolor}{rgb}{1.0,0.89,0.66};
  \node (opNode3) at (481.5bp,158.0bp) [draw=strokecolor,fill=fillcolor,rectangle] {$\bowtie_{(Z = Z)}$};
  \definecolor{strokecolor}{rgb}{0.5,0.0,0.5};
  \definecolor{fillcolor}{rgb}{0.84,0.8,1.0};
  \node (opNode4) at (601.5bp,208.5bp) [draw=strokecolor,fill=fillcolor,rectangle] {$\pi_{I \to X,O \to Z}$};
  \definecolor{strokecolor}{rgb}{0.0,0.0,0.0};
  \definecolor{fillcolor}{rgb}{0.93,0.93,0.93};
  \node (opNode5) at (701.5bp,208.5bp) [draw=strokecolor,fill=fillcolor,rectangle] {Train};
  \definecolor{strokecolor}{rgb}{0.5,0.0,0.5};
  \definecolor{fillcolor}{rgb}{0.84,0.8,1.0};
  \node (opNode6) at (601.5bp,158.0bp) [draw=strokecolor,fill=fillcolor,rectangle] {$\pi_{I \to Z,O \to Y}$};
  \definecolor{strokecolor}{rgb}{0.0,0.0,0.0};
  \definecolor{fillcolor}{rgb}{0.93,0.93,0.93};
  \node (opNode7) at (701.5bp,158.0bp) [draw=strokecolor,fill=fillcolor,rectangle] {Train};
  \definecolor{strokecolor}{rgb}{0.0,0.33,0.58};
  \definecolor{fillcolor}{rgb}{0.85,0.85,1.0};
  \node (opNode9) at (321.5bp,158.0bp) [draw=strokecolor,fill=fillcolor,rectangle] {$-$};
  \definecolor{strokecolor}{rgb}{0.5,0.0,0.5};
  \definecolor{fillcolor}{rgb}{0.84,0.8,1.0};
  \node (opNode10) at (391.5bp,158.0bp) [draw=strokecolor,fill=fillcolor,rectangle] {$\pi_{X,Y}$};
  \definecolor{strokecolor}{rgb}{0.0,0.0,0.0};
  \definecolor{fillcolor}{rgb}{0.93,0.93,0.93};
  \node (opNode14) at (701.5bp,107.0bp) [draw=strokecolor,fill=fillcolor,rectangle] {Train};

  \draw [->] (opNode0) -> (opNode1);
  \draw [->] (opNode1) -> (opNode2);
  \draw [->] (opNode6) -> (opNode7);
  \draw [->] (opNode2) -> (opNode9);
  \draw [->] (opNode4) -> (opNode5);
  \draw [->] (opNode10) -> (opNode3);
  \draw [->] (opNode9) -> (opNode10);
  \draw [->] (opNode2) ..controls (272.142bp,208.6bp) and (348.836bp,228.23bp)  .. (opNode3);
  \draw [->] (opNode3) -> (opNode6);
  \draw [->] (opNode9) ..controls (402.142bp,98.6bp) and (448.836bp,98.23bp)  .. (opNode14);
  \draw [->] (opNode3) -> (opNode4);

\end{tikzpicture}
  } 
  $\,$\\[-2mm]
  \caption{Translation for rule $r_1$}
  \label{fig:only2hop-translation}
\end{figure}

%
\begin{Example}\label{ex:example-translation}
Consider the translation of the rule $r_1$ from Fig.\,\ref{fig:running-example-db}. 
The $\cal RA$ 
graph for $r_1$ 
is shown in Fig.\,\ref{fig:only2hop-translation}. 
The translations of the first two goals are joined to compute the variable bindings 
for the 
positive part of the query. The negated goal is translated into a set difference 
between the positive part projected on $X,Y$ and relation $\rel{Train}$. 
The remaining 
three operators (from the left) join the positive with the negative part, 
project on the head variables, and remove duplicates.
\end{Example}



\BG{Removed the Optimization subsection unless we have something to say here}

\section{Experiments}
\label{sec:experiments}

We evaluate the performance of our solution 
over a co-author graph relation 
extracted from DBLP (\url{http://www.dblp.org/})
as well as over the TPC-H decision support benchmark (\url{http://www.tpc.org/tpch/default.asp}).
We compare our approach for computing explanations 
with the approach introduced for provenance games~\cite{KL13}. 
We call the provenance game approach \texttt{Direct Method(DM)}, because it 
directly constructs the full provenance graph. 
We have created subsets of the DBLP dataset with 100, 1K, 10K, 100K, 1M, 
and 8M 
co-author pairs (tuples). 
For the TPC-H benchmark, we used the following database sizes: 10MB, 100MB, 1GB, and 10GB.
All experiments were run on a machine with 2 x 3.3Ghz AMD Opteron 4238 CPUs (12 cores in total) and 128GB RAM running Oracle Linux 6.4. 
We use the commercial DBMS X (name omitted due to licensing restrictions) as a backend. 
Unless stated otherwise, each experiment was repeated 100 times and we report the median runtime.
We allocated a timeslot of 10 minutes for each run.
Computations that did not finish in the allocated time are omitted from the graphs.
\begin{figure}[t]
 \centering
  \begin{minipage}{0.98\linewidth}
    \centering
    $\,$\\[-3mm]
    \begin{minipage}{0.82\linewidth}
    \centering\scriptsize
    \begin{align*}
      r_1: \rel{only2hop}(X,Y) &\dlImp \rel{DBLP}(X,Z), \rel{DBLP}(Z,Y), \dlNeg \rel{DBLP}(X,Y)\\[1mm] \hline \\[-3mm]
      r_2: \rel{XwithYnotZ}(X,Y) &\dlImp \rel{DBLP}(X,Y), \dlNeg \rel{Q_1}(X)\\
      r_{2'}: \rel{Q_1}(X) &\dlImp \rel{DBLP}(X,\text{`Svein Johannessen'})\\[1mm] \hline \\[-3mm]
      r_3: \rel{only3hop}(X,Y) &\dlImp \rel{DBLP}(X,A), \rel{DBLP}(A,B), \rel{DBLP}(B,Y),\\ & \mathtab\mathtab \dlNeg \rel{E_1}(X), \dlNeg \rel{E_2}(X) \\
      r_{3'}: \rel{E_1}(X) &\dlImp \rel{DBLP}(X,Y)\\ 
      r_{3''}: \rel{E_2}(X) &\dlImp \rel{DBLP}(X,A), \rel{DBLP}(A,Y) 
      \\[1mm] \hline \\[-3mm]
      r_4: \rel{ordPriority}(X,Y) &\dlImp \rel{CUSTOMER}(A,X,B,C,D,E,F,G), \\ & \mathtab\mathtab \rel{ORDERS}(A,H,I,J,K,Y,M,N,O)\\[1mm] \hline 
    \end{align*}\\[-9mm]
    \begin{align*}
      r_5: \rel{ordDisc}(X,Y) &\dlImp \rel{CUSTOMER}(A,X,C,D,E,F,G,H), \\ 
	& \hspace{-11mm} \rel{ORDERS} (I,A,J,K,L,M,O,P,Q), \\  
	\hspace{-3mm} \rel{LINEITEM} & (I,R,S,T,U,V,Y,W,Z,A',B',C',D',E',F',G')\\[1mm] \hline \\[-3mm]
      r_6: \rel{partNotAsia}(X) &\dlImp \rel{PART}(A,X,B,C,D,E,F,G,H), \\ 
	&\rel{PARTSUPP} (A,I,J,K,L), \rel{SUPPLIER}(I,M,N,O,P,Q,R), \\  
	&\rel{NATION} (O,S,T,U), \dlNeg \rel{R_1}(T,\text{`ASIA'})\\ 
      r_{6'}: \rel{R_1}(T,Z) &\dlImp \rel{REGION}(T,Z,V)\\
    \end{align*}
    \end{minipage}
  \end{minipage}
  $\,$\\[-4mm]
  \caption{DBLP and TPC-H queries for experiments}
  \label{fig:experi-queries}
\end{figure}

\SL{$r_6$ still remains to be decided.}
\mypartitle{Workloads}
We compute explanations for the queries in Fig.\,\ref{fig:experi-queries} over 
the datasets we have introduced. 
For the DBLP dataset, we consider: \rel{only2hop} ($r_1$) which is our running example query; 
\rel{XwithYnotZ} ($r_2$) that returns authors that are direct co-authors of a certain person $Y$, 
but not of ``Svein Johannessen''; 
\rel{only3hop} ($r_3$) that returns pairs of authors $(X,Y)$ that are connected via a path of length 3 in the co-author graph where $X$ is not a co-author or indirect co-author (2 hops) of $Y$. 
For TPC-H, we consider: \rel{ordPriority} ($r_4$) which returns for each customer the priorities of her/his orders;
\rel{ordDisc} ($r_5$) which returns customers and the discount rates of items in their orders; 
\rel{partNotAsia} ($r_6$) which finds parts that can be supplied from a country that is not in Asia. 
\mypartitle{Implementing DM}
As introduced in Sec.~\ref{sec:intro}, 
\texttt{DM} has to instantiate a 
graph with ${\cal O}(\card{\adom{I}}^n)$ nodes
where $n$ is the maximal number of variables in a rule. 
%
%
We do not have a full implementation of \texttt{DM}, 
but can compute a conservative 
lower bound for the runtime of the step constructing the game graph 
by executing  
a query that computes an $n$-way cross-product over the active domain.
Note that the actual runtime will be much higher 
because 1) several edges are created for each rule binding (we underestimate the number of nodes of the constructed graph) and 
2) recursive Datalog queries have to be evaluated over this graph using the well-founded semantics. 
The results for different instance sizes and number of variables are shown in 
Fig.\,\ref{tab:baseline}.
Even for only 2 variables, \texttt{DM} did not finish
for datasets of more than 10K tuples
within the allocated 10 min timeslot. 
For queries with more than 4 variables, 
\texttt{DM} did not even finish for the smallest dataset. 
%
\begin{figure}
  \begin{center}
	\scriptsize \centering $\,$\\[-2.5mm]
    \begin{tabular}{|l|c|c|c|c|} \hline
	\thead {Num of Vars \textbackslash\, DBLP (\#tuples)} & \thead {100} & \thead {1K} & \thead {10K} & \thead {100K}
		\\ \hline
		2 Variables ($r_2$) & 0.043 & 0.171 & 14.016 & -
    		\\ \hline
	        3 Variables ($r_1$) & 0.294 & 285.524 & - & -
    		\\ \hline
	        4 Variables ($r_3$) & 56.070 & - & - & -
		\\ \hline
	\thead {Num of Vars \textbackslash\, TPC-H (Size)} & \thead {10MB} & \thead {100MB} & \thead {1GB} & \thead {10GB}
    		\\ \hline
		($>$ 10) Variables ($r_4$, $r_5$, $r_6$) & - & - & - & -
    		\\ \hline
    \end{tabular}
  \end{center}
  $\,$\\[-6mm]
  \caption{Runtime of \texttt{DM} in seconds. For entries with `-', the computation did not finish in the allocated time of 10 min.}
  \label{tab:baseline}
\end{figure}


\begin{figure}[t]
\begin{minipage}{0.98\linewidth}
$\,$\\[-7mm]
\scriptsize\centering
\subfloat[\small Runtime of \rel{only2hop}]
{
  \includegraphics[width=0.50\columnwidth,trim=0 60 0 0, clip]{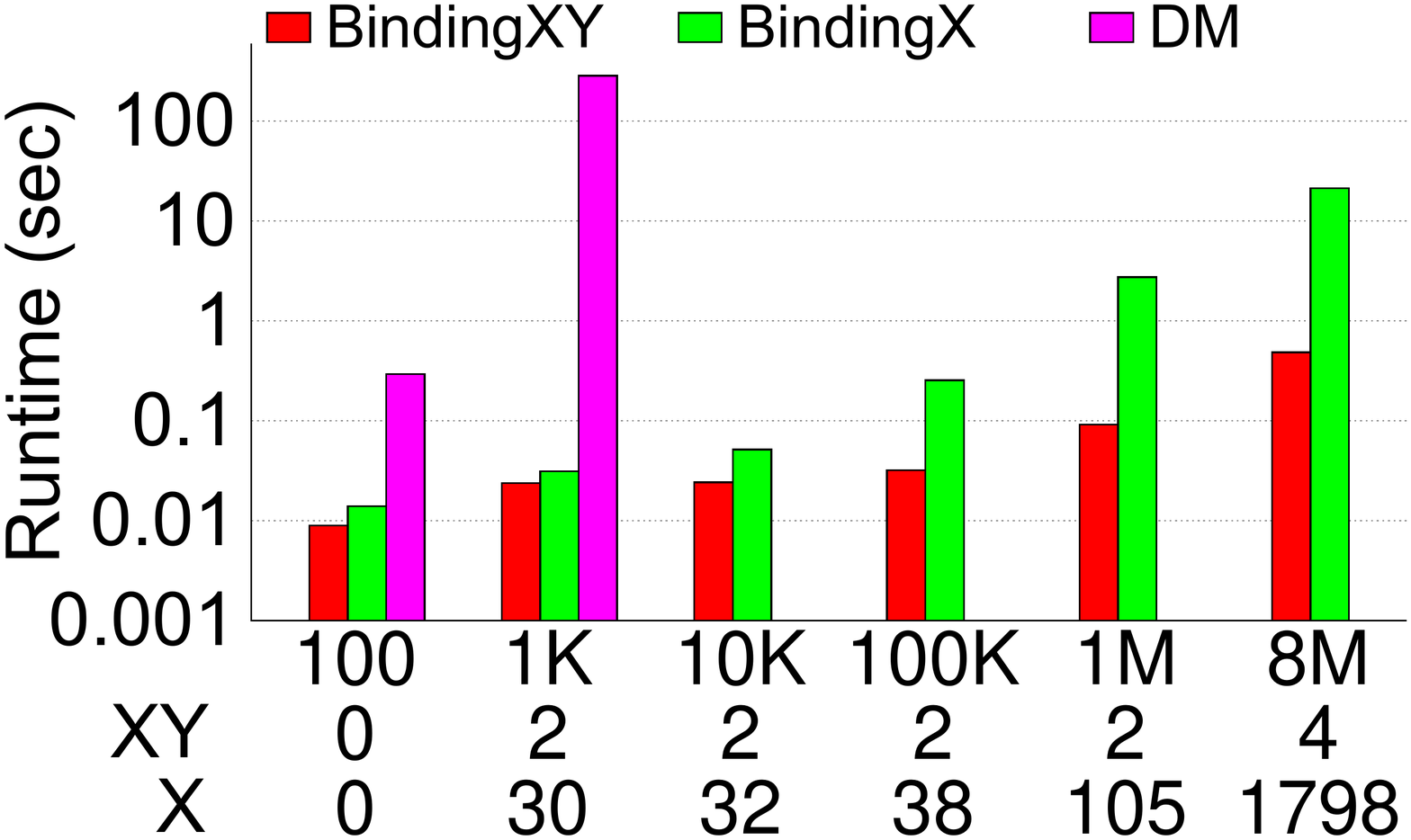}
}
\subfloat[\small Runtime of \rel{XwithYnotZ}]
{
  \includegraphics[width=0.50\columnwidth,trim=0 60 0 0, clip]{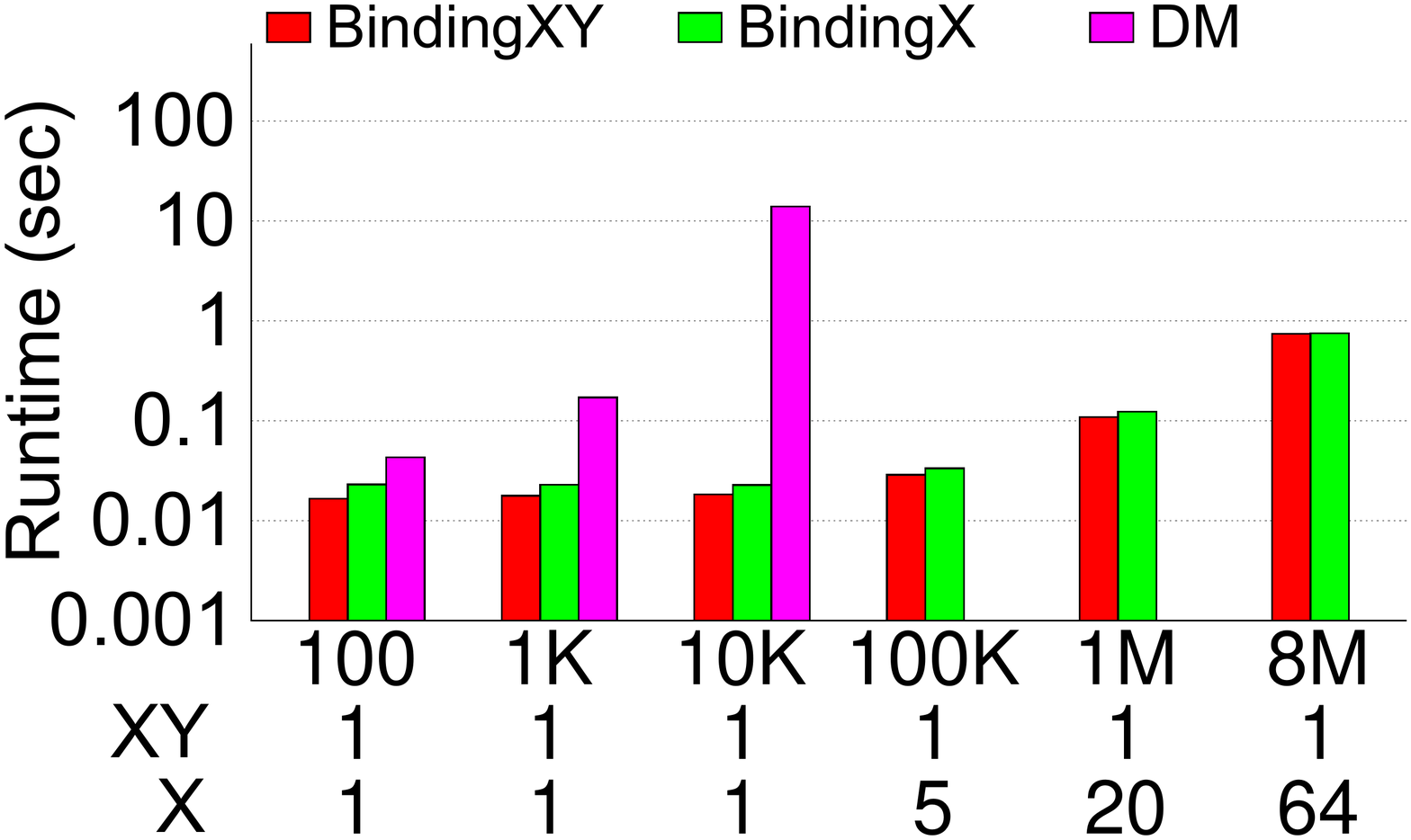}
}
\\[2mm]
\begin{minipage}{0.98\linewidth}
\scriptsize\centering
\subfloat[\small Variable bindings for DBLP $\provQ$s]{
  \begin{minipage}{0.98\linewidth}
  \centering$\,$\\[-1mm]
  \begin{tabular}{|c|cc|}
    \thead{Query \textbackslash\, Binding}&\thead{X}&\thead{Y}\\
    (a) $\rel{only2hop}$ & Tore Risch & Rafi Ahmed \\
    (b) \rel{XwithYnotZ} & Arjan Durresi & Raj Jain \\
    \hline
  \end{tabular}
\end{minipage}
}
\end{minipage}
\\[-6mm]
\subfloat[\small Runtime of \rel{ordPriority}]{
  \includegraphics[width=0.50\columnwidth,trim=0 60 0 0, clip]{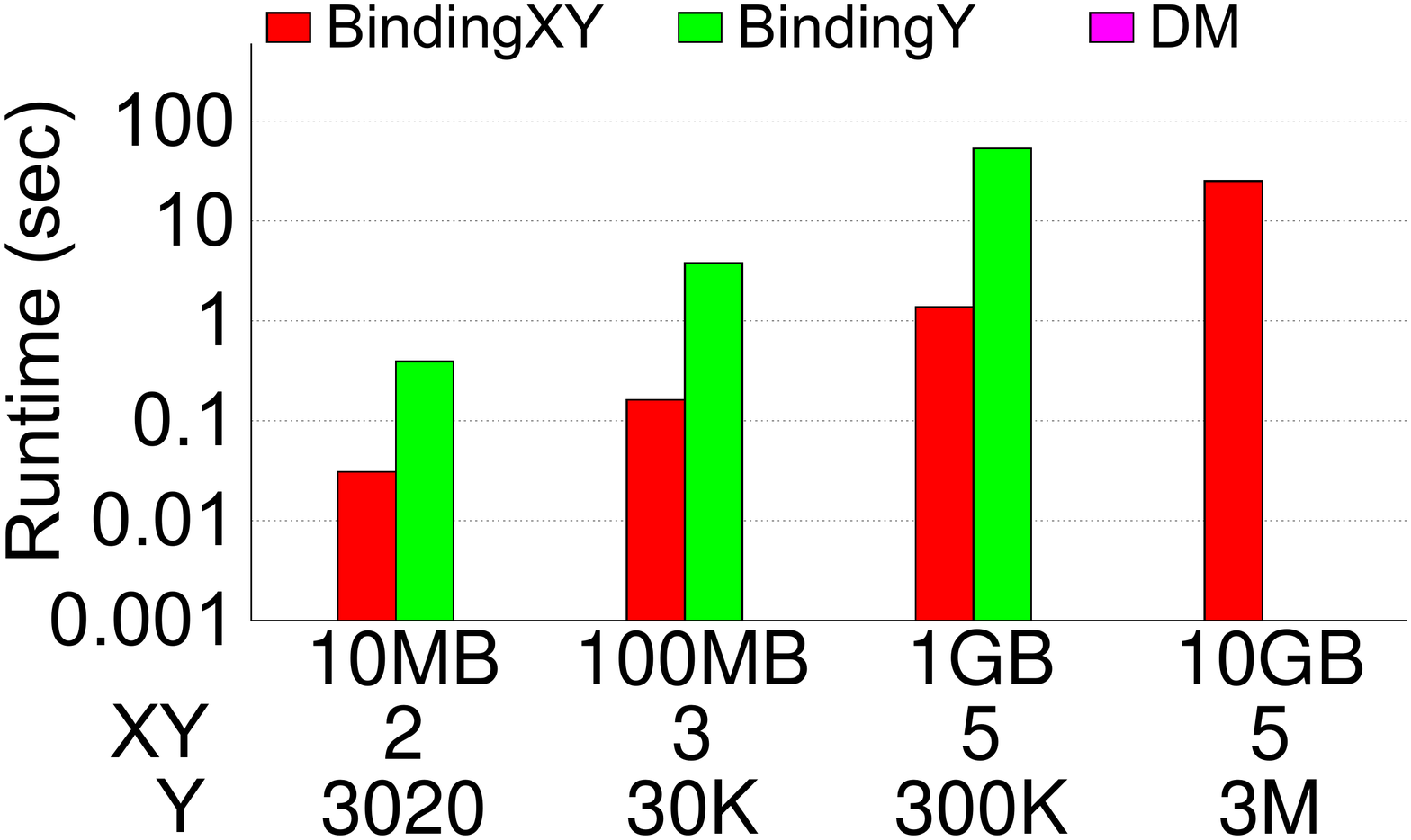}
}
\subfloat[\small Runtime of \rel{ordDisc}]{
  \includegraphics[width=0.50\columnwidth,trim=0 60 0 0, clip]{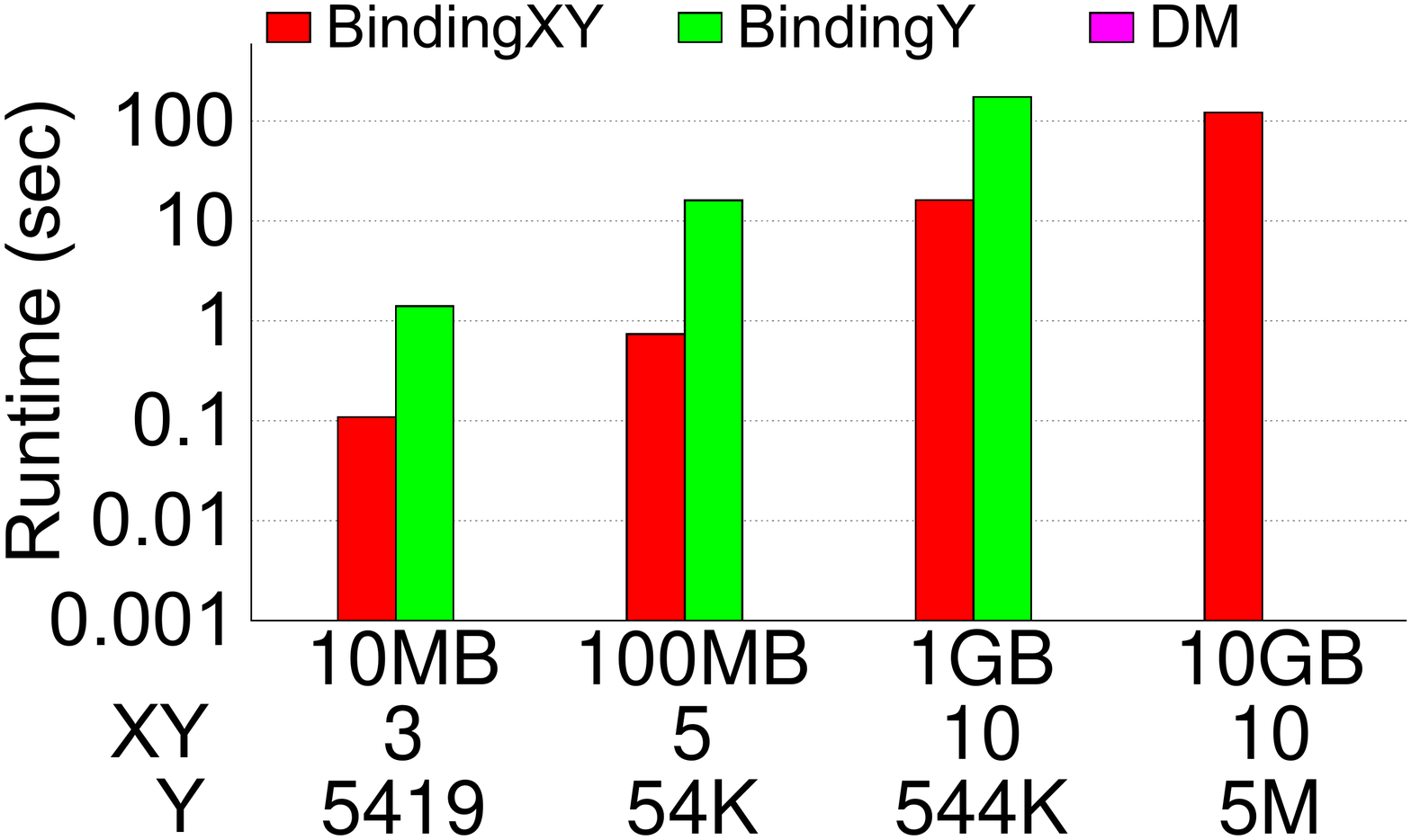}
}
\end{minipage}
\\[3mm]
\begin{minipage}{0.98\linewidth}
\scriptsize\centering
\subfloat[\small Variable bindings for TPC-H $\provQ$s]{
  \begin{minipage}{0.98\linewidth}
 \centering$\,$\\[-1mm]
\begin{tabular}{|c|cc|}
\thead{Query \textbackslash\, Binding}&\thead{X}&\thead{Y}\\
(d) $\rel{ordPriority}$ & Customer16 & 1-URGENT \\
(e) $\rel{ordDisc}$ & Customer16 & 0 \\
\hline
\end{tabular}
\end{minipage}
}
\end{minipage}
$\,$\\[-2mm]
\caption{Runtime - Why questions}
\label{fig:perf-why}
\end{figure}

\mypartitle{Why Questions}
The runtime incurred for generating explanations for 
 why questions 
over the queries $r_1$, $r_2$, $r_4$, and $r_5$ (Fig.\,\ref{fig:experi-queries}) is shown in Fig.\,\ref{fig:perf-why}.
For the evaluation, 
we consider the effect of 
the different binding patterns on performance.
Fig.\,\ref{fig:perf-why}.c and \ref{fig:perf-why}.f show the variables bound by the PQs we have considered. 
%
Fig.\,\ref{fig:perf-why}.a and \ref{fig:perf-why}.b show 
the performance results for $r_1$ and $r_2$, respectively. 
We also show 
number of rule nodes in the provenance graph
for each binding pattern below the X axis.
If only variable $X$ 
is bound (\texttt{BindingX}), 
then the queries determine 
authors 
that  
occur together with the author we have bound to $X$
in the query result. 
For instance, the explanation derived 
for 
$\rel{only2hop}$ 
with 
\texttt{BindingX}
(Fig.\,\ref{fig:perf-why}.a) 
explains why persons are indirect, but not direct, co-authors of ``Tore Risch''. 
If both 
$X$ and $Y$ are bound (\texttt{BindingXY}), 
then the explanation for $r_1$ and $r_2$ is limited to a particular indirect and direct 
co-author, respectively.
The runtime for generating explanations using our approach 
exhibits roughly linear growth in the dataset size
and dominates 
\texttt{DM} even for the small instances.
Furthermore, Fig.\,\ref{fig:perf-why}.d and \ref{fig:perf-why}.e 
(for $r_4$ and $r_5$, respectively) show that our approach can 
handle queries with many variables where \texttt{DM} times out 
even for 
the smallest dataset we have considered.
%
Binding one variable, e.g., \texttt{BindingY}, in queries $r_4$ and $r_5$ 
expresses 
a 
condition, 
e.g., $Y$ = `1-URGENT' in $r_4$ requires the order priority to be urgent.
If both variables are bound, 
then the provenance question verifies the existence of orders for a certain customer 
(e.g., why ``Customer16'' has at least one urgent order). 
Runtimes exhibit the same trend as for the DBLP queries. 
\begin{figure}[t]
\begin{minipage}{0.98\linewidth}
$\,$\\[-11mm]
\centering
\subfloat[\small Runtime of \rel{only2hop}]{\includegraphics[width=0.50\columnwidth,trim=0 60 0 0, clip]{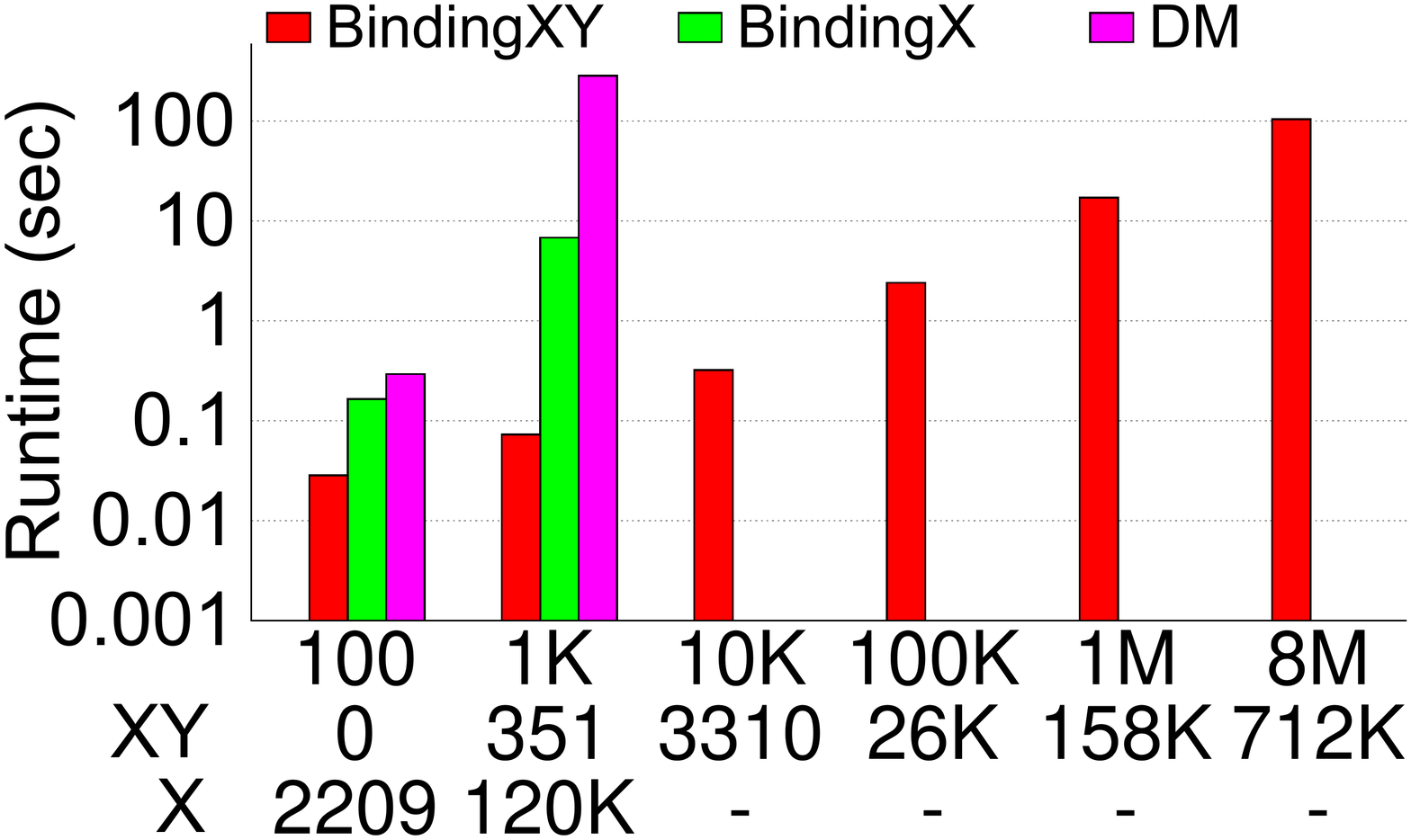}}
\subfloat[\small Runtime of \rel{XwithYnotZ}]{\includegraphics[width=0.50\columnwidth,trim=0 60 0 0, clip]{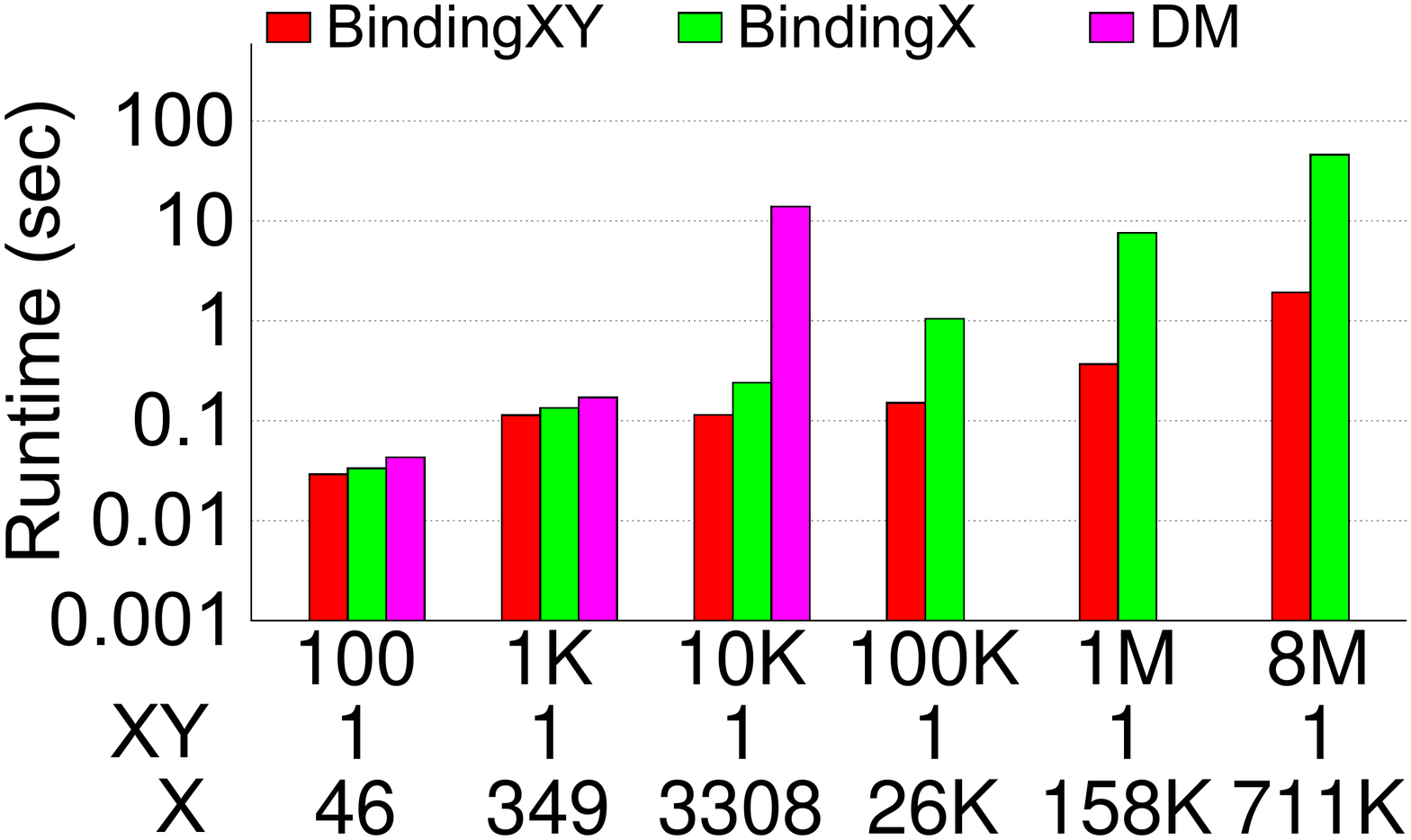}}
\end{minipage}
\\[3mm]
\begin{minipage}{0.98\linewidth}
\scriptsize\centering$\,$\\[-1mm]
\subfloat[\small Variable bindings for DBLP $\provQ$s]{
\begin{tabular}{|c|cc|}
\thead{Query \textbackslash\, Binding}&\thead{X}&\thead{Y}\\
(a) \rel{only2hop} & Tore Risch & Svein Johannessen \\
(b) \rel{XwithYnotZ} & Tor Skeie & Joo-Ho Lee\\
\hline
\end{tabular}
}
\end{minipage}
$\,$\\[-2mm]
\caption{Runtime - Why-not questions}
\label{fig:perf-whynot}
\end{figure}

\mypartitle{Why-not Provenance}
We 
have queries $r_1$ and $r_2$ from Fig.\,\ref{fig:experi-queries}
to evaluate the performance of computing explanations for failed derivations.
When binding all variables in the $\provQ$ (\texttt{BindingXY}) with the information in Fig.\,\ref{fig:perf-whynot}.c, 
these queries check if a particular set of authors cannot appear together in the result.
For instance, for $\rel{only2hop}$ $(r_1)$ the query checks why ``Tore Risch'' is either
not an indirect co-author 
or 
a direct co-author of ``Svein Johannessen''.
If one variable is bound (\texttt{BindingX}), then 
the why-not question explains for pairs of authors where one of the authors is bound to $X$, why the pair does not appear together in the query result.
The results for queries on $r_1$ and $r_2$ are shown in 
Fig.\,\ref{fig:perf-whynot}.a and \ref{fig:perf-whynot}.b., respectively.
The number of output tuples produced by the provenance computation 
(based on the number of rule nodes shown below the X axis) 
is quadratic in the database size resulting in a quadratice increase in runtime. 
Our approach improves the performance over large instances in comparison to \texttt{DM},
which is limited to very small datasets (less than 10K). 
%
%
Limiting the result size of missing answer questions 
for queries with many existential variables ($r_4$, $r_5$, and $r_6$) 
would 
require 
aggressive summarization 
techniques, which we will address 
in future work.

\begin{figure}[t]
\begin{minipage}{0.98\linewidth}
$\,$\\[-7mm]
\centering
\subfloat[\small Runtime of \rel{only3hop}]{\includegraphics[width=0.50\columnwidth,trim=0 60 0 0, clip]{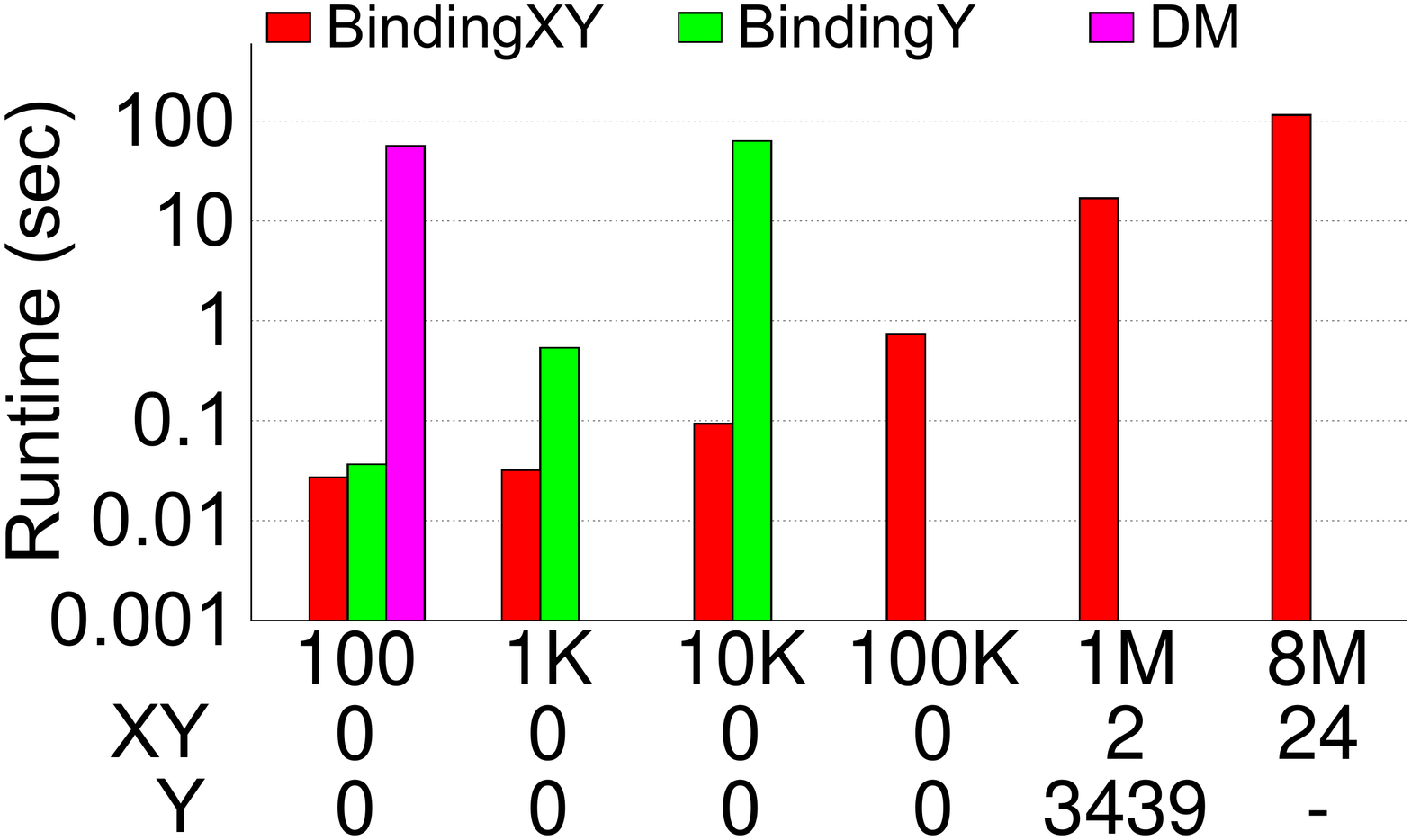}}
\subfloat[\small Runtime of \rel{partNotAsia}]{\includegraphics[width=0.50\columnwidth,trim=0 60 0 0, clip]{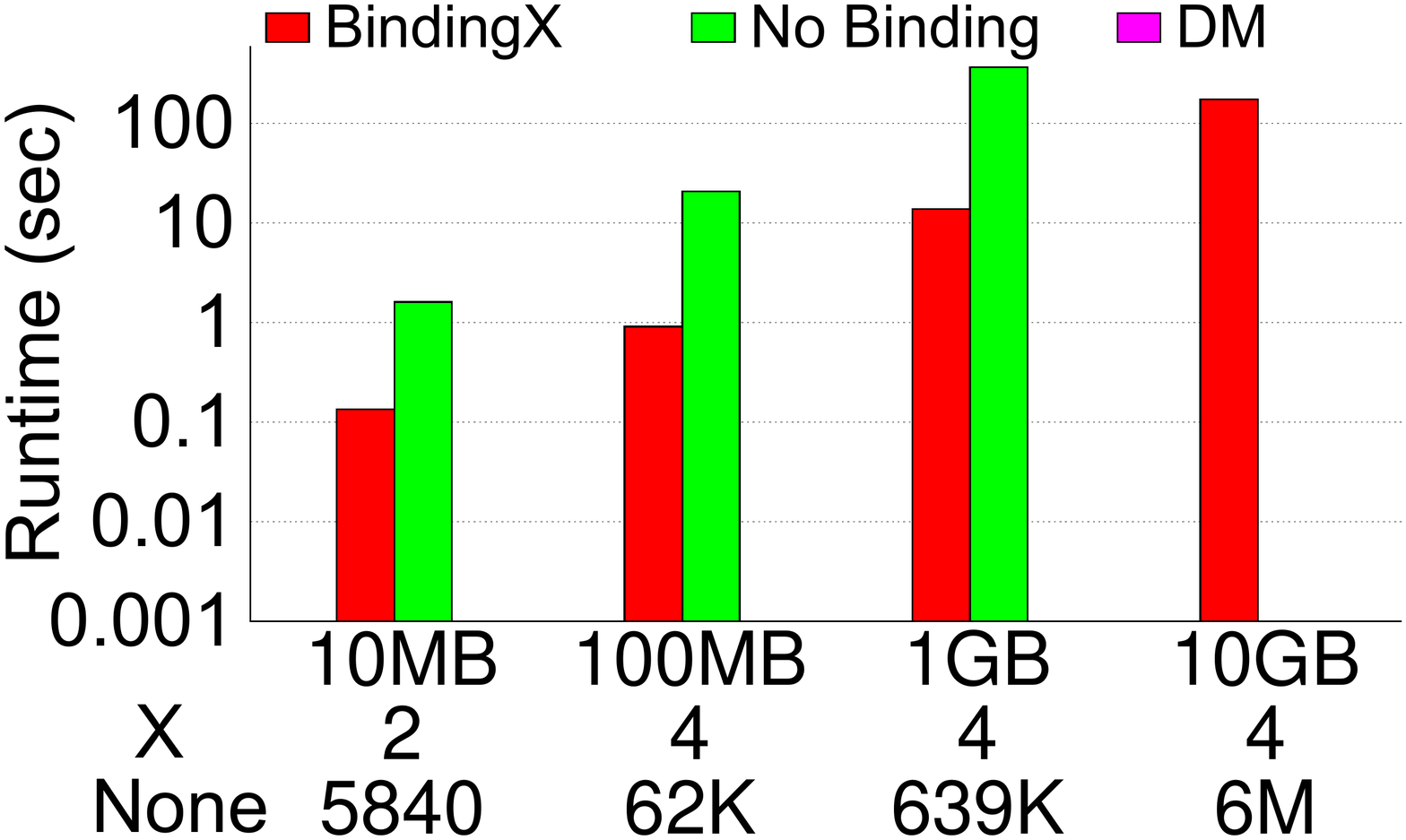}}
\end{minipage}
\\[3mm]
\begin{minipage}{0.98\linewidth}
\scriptsize\centering
\subfloat[\small Variable bindings for DBLP and TPC-H $\provQ$s]{
  \begin{minipage}{0.98\linewidth}
  \centering$\,$\\[-1mm]
\begin{tabular}{|c|cc|}
\thead{Query \textbackslash\, Binding}&\thead{X} &\thead{Y}\\
(a) $\rel{only3hop}$ & Alex Benton & Paul Erdoes \\
(b) $\rel{partNotAsia}$ & grcpi$^{1}$ & - \\
\hline
\end{tabular}\\[1mm]
${}^{1}$~grcpi = ghost royal chocolate peach ivory
\end{minipage}
}
\end{minipage}
$\,$\\[-2mm]
\caption{Runtime - Why questions over queries with negation}
\label{fig:perf-uni}
\end{figure}
\SL{The graph for TPCH above should be updated.}
\mypartitle{Queries with Negation}
Recall that our approach 
 handles queries with negation.
We choose 
rules $r_3$ (multiple negated goals)
and $r_6$ (one negated goal) shown in Fig.\,\ref{fig:experi-queries} to evaluate the performance of asking why questions over such queries. 
We use the bindings shown in Fig.\,\ref{fig:perf-uni}.c.
The results shown in Fig.\,\ref{fig:perf-uni}.a and \ref{fig:perf-uni}.b demonstrate that 
our approach 
efficiently computes explanations for $r_3$ and $r_6$, respectively.
When increasing the database size, the runtimes of PQs for these queries exhibit the same trend 
as observed for other why (why-not) questions and significantly outperform \texttt{DM}.
For instance, the performance of the query 
$\rel{partNotAsia}$ (Fig.\,\ref{fig:perf-uni}.b),
which contains many variables and negation  
exhibits
the same trend as 
 queries that have no negation 
(i.e., $r_4$ and $r_5$). 

\section{Conclusions}
\label{sec:concl}

We present a unified framework for explaining answers and non-answers over first-order (FO) queries. 
Our approach is based on the concept of firing rules that we extend to support negation and missing answers. 
Our 
efficient middleware implementation generates a Datalog program that 
computes the explanation 
for a 
provenance question and compiles this program 
into SQL. 
Our experimental evaluation demonstrates that  
by avoiding to generate irrelevant parts of the graph 
for the provenance question 
we can answer provenance questions over large instances.
An interesting avenue for future work is to investigate summarized 
representation of provenance (e.g., in the spirit of ~\cite{CC14,EA14,GK15,RK14}) 
to deal with the large size of explanations for missing answers. 
Other topics of interest include considering integrity constraints in the provenance graph construction 
(e.g., rule derivations 
can never succeed if they violate integrity constrains), 
marrying the approach with ideas from missing answer approaches that only return one explanation 
that is optimal according to some criterion, and extending the approach for more expressive query languages 
(e.g., aggregation or non-stratified recursive programs).


\bibliographystyle{abbrv}
\bibliography{cp}

\begin{thebibliography}{10}

\bibitem{AG14}
B.~Arab, D.~Gawlick, V.~Radhakrishnan, H.~Guo, and B.~Glavic.
\newblock A generic provenance middleware for database queries, updates, and
  transactions.
\newblock In {\em TaPP}, 2014.

\bibitem{BMSU86}
F.~Bancilhon, D.~Maier, Y.~Sagiv, and J.~D. Ullman.
\newblock Magic sets and other strange ways to implement logic programs.
\newblock In {\em PODS}, pages 1--15, 1986.

\bibitem{BH14a}
N.~Bidoit, M.~Herschel, and K.~Tzompanaki.
\newblock Immutably answering why-not questions for equivalent conjunctive
  queries.
\newblock In {\em TaPP}, 2014.

\bibitem{BH14}
N.~Bidoit, M.~Herschel, K.~Tzompanaki, et~al.
\newblock {Query-Based Why-Not Provenance with NedExplain}.
\newblock In {\em EDBT}, pages 145--156, 2014.

\bibitem{CC14}
B.~t. Cate, C.~Civili, E.~Sherkhonov, and W.-C. Tan.
\newblock High-level why-not explanations using ontologies.
\newblock In {\em PODS}, pages 31--43, 2014.

\bibitem{CJ09}
A.~Chapman and H.~V. Jagadish.
\newblock {Why Not?}
\newblock In {\em SIGMOD}, pages 523--534, 2009.

\bibitem{cheney2009provenance}
J.~Cheney, L.~Chiticariu, and W.~Tan.
\newblock Provenance in databases: Why, how, and where.
\newblock {\em Foundations and Trends in Databases}, 1(4):379--474, 2009.

\bibitem{DG15c}
D.~Deutch, A.~Gilad, and Y.~Moskovitch.
\newblock Selective provenance for datalog programs using top-k queries.
\newblock {\em PVLDB}, 8(12):1394--1405, 2015.

\bibitem{DM14c}
D.~Deutch, T.~Milo, S.~Roy, and V.~Tannen.
\newblock Circuits for datalog provenance.
\newblock In {\em ICDT}, pages 201--212, 2014.

\bibitem{EA14}
K.~El~Gebaly, P.~Agrawal, L.~Golab, F.~Korn, and D.~Srivastava.
\newblock Interpretable and informative explanations of outcomes.
\newblock {\em PVLDB}, 8(1):61--72, 2014.

\bibitem{FK97}
J.~Flum, M.~Kubierschky, and B.~Lud{\"a}scher.
\newblock Total and partial well-founded datalog coincide.
\newblock In {\em ICDT}, pages 113--124, 1997.

\bibitem{GK15}
B.~Glavic, S.~K\"{o}hler, S.~Riddle, and B.~Lud\"{a}scher.
\newblock Towards constraint-based explanations for answers and non-answers.
\newblock In {\em TaPP}, 2015.

\bibitem{GM13}
B.~Glavic, R.~J. Miller, and G.~Alonso.
\newblock Using sql for efficient generation and querying of provenance
  information.
\newblock In {\em In search of elegance in the theory and practice of
  computation}, pages 291--320. 2013.

\bibitem{GA12}
T.~J. Green, M.~Aref, and G.~Karvounarakis.
\newblock Logicblox, platform and language: A tutorial.
\newblock In {\em Datalog in Academia and Industry}, pages 1--8. Springer,
  2012.

\bibitem{GK07a}
T.~J. Green, G.~Karvounarakis, Z.~G. Ives, and V.~Tannen.
\newblock {Update Exchange with Mappings and Provenance}.
\newblock In {\em VLDB}, pages 675--686, 2007.

\bibitem{HH10}
M.~Herschel and M.~Hernandez.
\newblock {Explaining Missing Answers to SPJUA Queries}.
\newblock {\em PVLDB}, 3(1):185--196, 2010.

\bibitem{huang2008provenance}
J.~Huang, T.~Chen, A.~Doan, and J.~Naughton.
\newblock On the provenance of non-answers to queries over extracted data.
\newblock In {\em VLDB}, pages 736--747, 2008.

\bibitem{grigoris-tj-simgodrec-2012}
G.~Karvounarakis and T.~J. Green.
\newblock Semiring-annotated data: queries and provenance.
\newblock {\em SIGMOD Record}, 41(3):5--14, 2012.

\bibitem{kohler2012declarative}
S.~K{\"o}hler, B.~Lud{\"a}scher, and Y.~Smaragdakis.
\newblock Declarative datalog debugging for mere mortals.
\newblock In {\em Datalog 2.0: Datalog in Academia and Industry}, pages
  111--122, 2012.

\bibitem{KL13}
S.~K{\"o}hler, B.~Lud{\"a}scher, and D.~Zinn.
\newblock First-order provenance games.
\newblock In {\em In Search of Elegance in the Theory and Practice of
  Computation}, pages 382--399. 2013.

\bibitem{LS16}
S.~Lee, S.~K\"{o}hler, B.~Lud\"{a}scher, and B.~Glavic.
\newblock {Implementing Unified Why- and Why-Not Provenance Through Games}.
\newblock In {\em TaPP (Poster)}, 2016.

\bibitem{MG10}
A.~Meliou, W.~Gatterbauer, K.~Moore, and D.~Suciu.
\newblock {The Complexity of Causality and Responsibility for Query Answers and
  non-Answers}.
\newblock {\em PVLDB}, 4(1):34--45, 2010.

\bibitem{PS09}
E.~Pontelli, T.~C. Son, and O.~Elkhatib.
\newblock Justifications for logic programs under answer set semantics.
\newblock {\em Theory and Practice of Logic Programming}, 9(01):1--56, 2009.

\bibitem{RK14}
S.~Riddle, S.~K\"ohler, and B.~Lud\"ascher.
\newblock Towards constraint provenance games.
\newblock In {\em TaPP}, 2014.

\bibitem{TC10}
Q.~T. Tran and C.-Y. Chan.
\newblock How to conquer why-not questions.
\newblock In {\em SIGMOD}, pages 15--26, 2010.

\bibitem{WZ14a}
Y.~Wu, M.~Zhao, A.~Haeberlen, W.~Zhou, and B.~T. Loo.
\newblock Diagnosing missing events in distributed systems with negative
  provenance.
\newblock In {\em SIGCOMM}, pages 383--394, 2014.

\bibitem{ZS10}
W.~Zhou, M.~Sherr, T.~Tao, X.~Li, B.~T. Loo, and Y.~Mao.
\newblock Efficient querying and maintenance of network provenance at
  internet-scale.
\newblock In {\em SIGMOD}, pages 615--626, 2010.

\end{thebibliography}

\end{document}